\DeclareMathAlphabet{\can}{OT1}{cmss}{m}{n}
\newtheorem{thm}{Theorem}[section]
\newtheorem{cor}[thm]{Corollary}
\newtheorem{lem}[thm]{Lemma}
\newtheorem{exa}[thm]{Example}
\theoremstyle{definition}
\theoremstyle{fact}
\theoremstyle{conjecture}
\numberwithin{equation}{section}
\newcommand{\ord}{\operatorname{ord}}
\begin{document}
\title[Cyclic codes]{Weight distributions of  cyclic codes with respect to pairwise coprime order elements}

\author[C. Li] {Chengju Li}
\address{\rm Department of Mathematics, Nanjing University of Aeronautics and Astronautics,
Nanjing, 211100, P.R. China
} \email{lichengju1987@163.com}
\author[Q. Yue]{Qin Yue}
\address{\rm Department of Mathematics, Nanjing University of Aeronautics and Astronautics,
Nanjing, 211100, P. R. China} \email{yueqin@nuaa.edu.cn}
\author[F. Li] {Fengwei Li}
\address{\rm School of Mathematics and Statistics, Zaozhuang University, Zaozhuang,
277160, P. R. China \\ Department of Mathematics, Nanjing University of Aeronautics and Astronautics,
Nanjing, 211100, P.R. China}\email{lfwzzu@126.com}
\thanks{The paper is supported by NNSF of China (No. 11171150)}

\subjclass[2000]{94B15, 11T71, 11T24}
 \keywords{Weight distribution; Cyclic codes; Gauss periods; Character sums}
\begin{abstract}
 Let $\Bbb F_r$ be an extension of a finite field $\Bbb F_q$ with  $r=q^m$.
Let each $g_i$ be of order $n_i$ in $\Bbb F_r^*$ and $\gcd(n_i, n_j)=1$ for $1\leq i \neq j  \leq u$.
 We define a cyclic code over $\Bbb F_q$ by
 $$\mathcal C_{(q, m, n_1,n_2, \ldots, n_u)}=\{ c(a_1, a_2, \ldots, a_u) : a_1, a_2, \ldots, a_u \in \Bbb F_r\},$$
 where
 $$c(a_1, a_2, \ldots, a_u)=(\mbox{Tr}_{r/q}(\sum_{i=1}^ua_ig_i^0), \ldots, \mbox{Tr}_{r/q}(\sum_{i=1}^ua_ig_i^{n-1} ))$$
and $n=n_1n_2\cdots n_u$. In this paper, we present a method to compute the weights of
$\mathcal C_{(q, m, n_1,n_2, \ldots, n_u)}$. Further, we  determine the weight distributions of
the cyclic codes $\mathcal C_{(q, m, n_1,n_2)}$ and $\mathcal C_{(q, m, n_1,n_2,1)}$.

\end{abstract}
\maketitle

\section{Introduction}

Let $\Bbb F_q$ be a finite field with $q$ elements, where $q=p^s$, $p$ is a prime, and
$s$ is a positive integer. An $[n, k, d]$ linear code $\mathcal{C}$ is a $k$-dimensional
subspace of $\Bbb F_q^n$ with minimum distance $d$. It is called cyclic if
$(c_0, c_1, \ldots, c_{n-1}) \in \mathcal C$ implies
$(c_{n-1}, c_0, c_1, \ldots, c_{n-2}) \in \mathcal C$.
By identifying the vector $(c_0, c_1, \ldots, c_{n-1}) \in \Bbb F_q^n$ with
$$c_0+c_1x+c_2x^2+\cdots+c_{n-1}c^{n-1} \in \Bbb F_q[x]/(x^n-1),$$
any code $\mathcal C$ of length $n$ over $\Bbb F_q$ corresponds to a subset of
$\Bbb F_q[x]/(x^n-1)$. Then $\mathcal C$ is a cyclic code if and only if the
corresponding subset is an ideal of $\Bbb F_q[x]/(x^n-1)$. Note that every
ideal of $\Bbb F_q[x]/(x^n-1)$ is principal. Hence there is a monic polynomial $g(x)$ with least
degree such that $\mathcal C=\langle g(x) \rangle$ and $g(x) \mid (x^n-1)$. Then $g(x)$ is called the generator
polynomial and $h(x)=(x^n-1)/g(x)$ is called the parity-check polynomial of the cyclic code
$\mathcal C$. Suppose that $h(x)$ has $u$ irreducible factors over $\Bbb F_q$, we call
$\mathcal C$ the dual of the cyclic code with $u$ zeros.

 Let $A_i$ be the number of codewords
with Hamming weight $i$ in the code $\mathcal C$ of length $n$. The weight enumerator of
$\mathcal C$ is defined by
$$1+A_1x+A_2x^2+\cdots+A_nx^n.$$
The sequence $(1, A_1, A_2, \ldots, A_n)$ is called the weight distribution of the code
$\mathcal C$.  In coding theory it is often desirable to know the weight distributions
of the codes because they can be used to estimate the error correcting capability and
the error probability of error detection and correction with respect to some algorithms.
This is quite useful in practice. Unfortunately, it is a very hard problem in general
and remains open for most cyclic codes.

 Let $r=q^m$ for a positive integer $m$ and $\alpha$ a generator of $\Bbb F_r^\ast$.
Let $h(x)=h_1(x)h_2(x) \cdots h_u(x)$, where $h_i(x) (1 \leq i \leq u)$ are distinct monic irreducible
polynomials over $\Bbb F_q$.  Let $g_i^{-1}=\alpha^{-N_i}$  be a root of $h_i(x)$ and $n_i$ the order of $g_i$
for $1 \leq i \leq u$. Denote $\delta=\gcd(r-1, N_1, N_2, \ldots, N_u)$ and $n=\frac {r-1} \delta$.
The cyclic code $\mathcal C$ can be defined by
\begin{equation}\mathcal C_{(q, m, n_1, \ldots, n_u)}=\{ c(a_1, a_2, \ldots, a_u) : a_1, a_2, \ldots, a_u \in \Bbb F_r\},\end{equation}
where \begin{equation}c(a_1, a_2, \ldots, a_u)=(\mbox{Tr}_{r/q}(\sum_{i=1}^ua_ig_i^0), \ldots, \mbox{Tr}_{r/q}(\sum_{i=1}^ua_ig_i^{n-1} ))\end{equation}and
$\mbox{Tr}_{r/q}$ denotes the trace function from $\Bbb F_r$ to $\Bbb F_q$.
It follows from Delsarte's Theorem \cite{D} that the code $\mathcal C$ is an $[n, k]$ cyclic code
over $\Bbb F_q$ with the parity-check polynomial $h(x)$, where $k=\deg (h_1(x))+\deg (h_2(x))+ \cdots +
\deg (h_u(x))$.  The weight distributions of such cyclic codes have been studied for many years and are known
in some cases. We describe the known results as follows.
 \begin{enumerate}
   \item For $u=1$, $\mathcal C_{(q, m, n_1)}$ is called an irreducible cyclic code. The weight distributions
   of irreducible cyclic codes have been extensively studied and can be found in
   \cite{BM72, BMY, DG, D09, DY, MC1, MC2}.
   \item  For $u=2$, i.e., $h(x)=h_1(x)h_2(x)$. The duals of the cyclic codes with two zeros have been
   well investigated when $\deg (h_1(x))=\deg (h_2(x))$.  If $g_1$ and
   $g_2$ have the same order in $\Bbb F_r^\ast$, we know $\deg (h_1(x))=\deg (h_2(x))$. Then the weight distribution of
      cyclic code $\mathcal C_{(q,m,n_1,n_1)}$ had been determined for some special cases \cite{DMLZ, FL, FM, Ma, V, W, X1, X2, X3, ZD}.
   If $\Bbb F_r^\ast=\langle g_1 \rangle=\langle g_2 \rangle$, then the weight distribution of
   the code $\mathcal C_{(q,m,r-1,r-1)}$ which is called the dual of primitive cyclic code with two zeros had been studied
   in \cite{BM, CCD, CCZ, C, F, LTW, Mc, M, S, YCD}.
   \item For $u=3$.  The results on the weight distribution of the dual of cyclic code $\mathcal C_{(q,m,n_1,n_2,n_3)}$  with three zeros can be found
   in \cite{LF, Z, ZDLZ} where the orders are not pairwise coprime.
   \item For arbitrary $u$. Yang et al. \cite{YXD} described a class of the duals of cyclic codes with $u$ zeros and determined their weight
   distributions under special conditions. Li et al. \cite{LHFG} also studied such cyclic codes  and developed a connection
   between the weight distribution and the spectra of Hermitian forms graphs.
 \end{enumerate}

   In the following, we always assume that $r-1=n_iN_i$ for $1 \leq i \leq u$  and $\gcd(n_i, n_j)=1$ for $i \neq j$.
Let $\alpha$ be a generator of $\Bbb F_r^\ast$, where $r=q^m$ for a positive integer $m$.
For $g_i=\alpha^{N_i}$, it is easily known that $g_i$ and $g_j$ are not conjugates of each other
 over $\Bbb F_q$ when $i \neq j$.  Denote $\delta=\gcd(r-1, N_1, N_2, \ldots, N_u)$, and $n=\frac {r-1} \delta$.
In fact, we have $n=n_1n_2 \cdots n_u$ because
$$\delta n_1 n_2 \cdots n_u=\gcd((r-1)n_1n_2 \cdots n_u, (r-1)n_2 \cdots n_u, \ldots, (r-1)n_1n_2 \cdots n_{u-1})=r-1.$$
In this paper, we present a method to compute the weight distribution of
$\mathcal C_{(q, r, n_1,n_2, \ldots, n_u)}$ when $\gcd(n_i, n_j)=1$ for $i \neq j$.
Further, we  determine the weight distributions of
the cyclic codes $\mathcal C_{(q, m, n_1,n_2)}$ and $\mathcal C_{(q, m, n_1,n_2,1)}$ clearly
by using the Gauss periods.

 This paper is organized as follows. In Section 2, we introduce some results about Gauss periods.
 In Section 3, we shall present a method to compute the weights of the
  cyclic code $\mathcal C_{(q, m, n_1,n_2, \ldots, n_u)}$. In Section 4, we
   obtain the weight distributions of
the cyclic code $\mathcal C_{(q, m, n_1,n_2)}$ by using Gauss periods and some examples.
In Section 5, we give the weight distributions of the cyclic code  $\mathcal C_{(q, m, n_1,n_2,1)}$
by using Gauss periods and  some examples.

For convenience, we introduce the following notation in this paper:

\begin{tabular}{ll}
$\Bbb F_r$ & finite field of $r$ elements, $r=q^m$, \\
$r-1=n_1N_1=n_2N_2$ & integer factorization of $r-1$,\\
$\alpha, \alpha^{\frac {r-1} {q-1}}$ & generators of $\Bbb F_r^\ast$ and $\Bbb F_q^\ast$, \\
$C_0^{(N_1,r)}=\langle\alpha^{N_1}\rangle$, & subgroup of order $\frac{r-1}{N_1}$ of $\Bbb F_r^\ast$, \\
$C_i^{(N_1, r)}=\alpha^iC_0^{(N_1,r)}$ & $i$th cyclotomic class of order $N_1$ in $\Bbb F_r^\ast$,\\
$\eta_i^{(N_1, r)}$ &  Gauss period of order $N_1$ over $\Bbb F_r^\ast$, \\
$\mbox{Tr}_{r/q}$ & trace function from $\Bbb F_r$ to $\Bbb F_q$, \\
$\phi$ &  canonical additive character of $\Bbb F_q$, \\
$\psi$ &   canonical additive character of $\Bbb F_r$, \\
$N_0=\frac{r-1} {q-1}, \Bbb F_q^\ast=C_0^{(N_0, r)}$, \\
$d_1=\gcd({N_0, N_1})$, \\
$ d_2=\gcd({N_0, N_2})$, \\
$ d=\gcd({\frac{N_0N_2} {d_2}, N_1})$. \\
\end{tabular}

\section{Gauss periods}

Let ${\Bbb F}_r$ be the finite field with $r$ elements, where $r$ is a power of prime $p$.
For any $a \in {\Bbb F}_r$, we can define
an additive character of the finite field ${\Bbb F}_r$ as follows:
$$\psi_a : {\Bbb F}_r \rightarrow {\Bbb C}^\ast, \psi_a(x)=\zeta_p^{\mbox{Tr}_{r/p}(ax)},$$
where $\zeta_p=e^{\frac{2\pi i}p}$ is a $p$-th primitive  root of unity and $\mbox{Tr}_{r/p}$ denotes
the trace function from ${\Bbb F}_r$ to ${\Bbb F}_p$. If $a=1$, then $\psi_1$ is
called the canonical additive character of $\Bbb F_r$. The orthogonal property of additive characters which can
be found in \cite{LN} is given by
$$\sum_{x \in \Bbb F_r} \psi_1(ax)=\left\{\begin{array}{ll}
r, \mbox{ if } a=0; \\
0, \mbox{ if } a \in {\Bbb F}_r ^\ast. \end{array}\right.$$

Let $r-1=nN$ and $\alpha$ a fixed primitive element of $\Bbb F_r$, where $r=q^m=p^{sm}$.
We define $C_i^{(N, r)}=\alpha^i \langle \alpha^N \rangle$ for $i=0, 1, \ldots, N-1$,
where $\langle \alpha^N \rangle$ denotes the subgroup of $\Bbb F_r^\ast$ generated
by $\alpha^N$. The Gauss periods of order $N$ are given by
$$\eta_i^{(N, r)}=\sum_{x \in C_i^{(N, r)}}\psi(x),$$
where $\psi$ is the canonical additive character of $\Bbb F_r$ and $\eta_i^{(N, r)}=\eta_{i \pmod N}^{(N, r)}$ if $i \geq N$.
In general, the explicit evaluation of Gauss periods is a very difficult problem.
 However, they can be computed in a few cases.

\begin{lem} \cite{My} When $N=2$, the Gauss periods are given by
$$\eta_0^{(2, r)}=\left\{\begin{array}{ll}
\frac {-1+(-1)^{sm-1} \sqrt r} 2, \mbox{ if } p \equiv 1 \pmod 4, \\
\frac {-1+(-1)^{sm-1} (\sqrt {-1})^{sm} \sqrt r} 2, \mbox{ if }  p \equiv 3 \pmod 4, \end{array}\right.$$
and $\eta_1^{(2, r)}=-1-\eta_0^{(2, r)}$.
\end{lem}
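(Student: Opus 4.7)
The plan is to reduce the computation to the classical evaluation of the quadratic Gauss sum over $\mathbb{F}_r$. First I would observe that since $r-1$ is even (as $r$ is a prime power with $N=2$ dividing $r-1$), the subgroup $C_0^{(2,r)} = \langle \alpha^2 \rangle$ is precisely the set of nonzero squares in $\mathbb{F}_r^*$, and $C_1^{(2,r)}$ is the set of nonsquares. Let $\chi$ denote the quadratic multiplicative character of $\mathbb{F}_r^*$, i.e., $\chi(x) = 1$ if $x \in C_0^{(2,r)}$ and $\chi(x) = -1$ if $x \in C_1^{(2,r)}$.

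Next I would set up two linear equations in $\eta_0^{(2,r)}$ and $\eta_1^{(2,r)}$. The orthogonality relation stated in Section 2 gives $\sum_{x \in \mathbb{F}_r^*} \psi(x) = -1$, which by partitioning $\mathbb{F}_r^* = C_0^{(2,r)} \cup C_1^{(2,r)}$ yields $\eta_0^{(2,r)} + \eta_1^{(2,r)} = -1$; this already proves the second assertion $\eta_1^{(2,r)} = -1 - \eta_0^{(2,r)}$. On the other hand, the quadratic Gauss sum is
\begin{equation*}
G(\chi,\psi) = \sum_{x \in \mathbb{F}_r^*} \chi(x)\psi(x) = \eta_0^{(2,r)} - \eta_1^{(2,r)},
\end{equation*}
so solving the two equations yields $\eta_0^{(2,r)} = \tfrac{-1 + G(\chi,\psi)}{2}$.

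It then remains to insert the classical evaluation of the quadratic Gauss sum over $\mathbb{F}_r$ with $r = p^{sm}$, which is a standard result (e.g. Theorem 5.15 of Lidl--Niederreiter). That theorem states $G(\chi,\psi) = (-1)^{sm-1}\sqrt{r}$ when $p \equiv 1 \pmod 4$ and $G(\chi,\psi) = (-1)^{sm-1}(\sqrt{-1})^{sm}\sqrt{r}$ when $p \equiv 3 \pmod 4$. Substituting these two values into $\eta_0^{(2,r)} = \tfrac{-1 + G(\chi,\psi)}{2}$ gives the two cases in the statement.

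The only real obstacle is the sign determination in the quadratic Gauss sum (the Gauss sign problem), which I would not reprove here but cite from the standard reference; everything else is direct from orthogonality and the partition of $\mathbb{F}_r^*$ into quadratic residues and nonresidues.
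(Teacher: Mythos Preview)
Your argument is correct: partitioning $\mathbb{F}_r^*$ into squares and nonsquares, using orthogonality to get $\eta_0^{(2,r)}+\eta_1^{(2,r)}=-1$, and identifying $\eta_0^{(2,r)}-\eta_1^{(2,r)}$ with the quadratic Gauss sum $G(\chi,\psi)$ is exactly the standard derivation, and the value you quote for $G(\chi,\psi)$ over $\mathbb{F}_r$ with $r=p^{sm}$ is the one in Lidl--Niederreiter, Theorem~5.15.

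Note, however, that the paper does not supply its own proof of this lemma; it simply cites it from Myerson~\cite{My}. So there is no in-paper argument to compare against. Your write-up is the natural proof one would give (and is essentially the route taken in the cited references), so nothing further is needed beyond the citation you already invoke for the sign of the quadratic Gauss sum.
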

The Gauss periods in the semi-primitive case are known and are described in the following lemma.

\begin{lem} \cite{My} Assume that there exists a least positive integer $e$ such that
$p^e \equiv -1\pmod N$. Let $r=p^{2ef}$ for some positive integer $f$.
\begin{enumerate}
  \item If $f, p$, and $\frac {p^e+1} N$ are all odd, then
  $$\eta_{N/2}^{(N, r)}=\frac {(N-1)\sqrt r-1} N, \eta_i^{(N, r)}=-\frac {\sqrt r+1} N \mbox { for }
  i \neq N/2.$$
  \item In all other cases,
  $$\eta_0^{(N, r)}=\frac {(-1)^{f+1}(N-1)\sqrt r-1} N, \eta_i^{(N, r)}=\frac {(-1)^f\sqrt r-1} N
  \mbox { for } i \neq 0.$$
\end{enumerate}
\end{lem}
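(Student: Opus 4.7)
The plan is to reduce the computation of each $\eta_i^{(N,r)}$ to the evaluation of Gauss sums via Fourier inversion on the cyclic group $\ZZ/N\ZZ$, and then to determine those Gauss sums explicitly by exploiting the semi-primitive hypothesis together with the Davenport--Hasse lifting theorem.

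First I fix a multiplicative character $\chi$ of $\Bbb F_r^\ast$ of order exactly $N$, say $\chi(\alpha)=\zeta_N$ a primitive $N$-th root of unity. Since $\chi^j$ is constant on each cyclotomic class $C_i^{(N,r)}$, the Gauss sums
$$G(\chi^j)\;=\;\sum_{x\in\Bbb F_r^\ast}\chi^j(x)\psi(x)$$
satisfy $G(\chi^j)=\sum_{i=0}^{N-1}\chi^j(\alpha^i)\,\eta_i^{(N,r)}$. Fourier inversion on $\ZZ/N\ZZ$, together with $G(\chi^0)=-1$, then yields
$$\eta_i^{(N,r)}\;=\;\frac{1}{N}\Bigl(-1+\sum_{j=1}^{N-1}\chi^{-j}(\alpha^i)\,G(\chi^j)\Bigr).$$

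Next I would evaluate $G(\chi^j)$ for $1\le j\le N-1$. The hypothesis $p^e\equiv -1\pmod N$ implies $\chi^{p^e}=\chi^{-1}$; combined with the universal identities $|G(\chi^j)|^2=r$, $\overline{G(\chi^j)}=\chi^j(-1)G(\chi^{-j})$, and the Galois invariance of $G(\chi^j)$ under $\chi\mapsto\chi^p$, this forces $G(\chi^j)\in\{\pm\sqrt r,\,\pm i\sqrt r\}$. To pin the sign down, I would invoke Davenport--Hasse: the character $\chi$ descends via the norm to a character $\chi'$ of the subfield $\Bbb F_{p^{2e}}^\ast$, and one has $G_{\Bbb F_r}(\chi^j)=(-1)^{f-1}G_{\Bbb F_{p^{2e}}}(\chi'^j)^f$. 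Over the smaller field the semi-primitive relation is sharp, and a classical argument (Stickelberger's congruence, or a reduction to a quadratic Gauss sum) yields $G_{\Bbb F_{p^{2e}}}(\chi'^j)=\pm p^e$ with the sign determined by $\chi^j(-1)$. In case (2) all nontrivial Gauss sums collapse to the common value $(-1)^{f+1}\sqrt r$, while in case (1) one finds $G(\chi^j)=(-1)^j\sqrt r$, the extra $(-1)^j$ arising because $\chi(-1)=-1$ there.

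Finally, substituting back into the inversion formula and using the elementary orthogonality identities $\sum_{j=1}^{N-1}\chi^{-j}(\alpha^i)=N\delta_{i,0}-1$ and $\sum_{j=1}^{N-1}(-1)^j\chi^{-j}(\alpha^i)=N\delta_{i,N/2}-1$, the sum collapses to the stated closed form: the distinguished index $i=0$ (or $i=N/2$ in case (1)) is singled out as the unique class on which all character twists cohere, while every other class receives an identical residual contribution. The split between cases (1) and (2) records exactly whether $\chi^j(-1)=-1$ or $1$, which in turn is governed by the parities of $f$, $p$, and $(p^e+1)/N$ as stated. The main obstacle in this program is the precise sign determination of the base Gauss sum $G_{\Bbb F_{p^{2e}}}(\chi'^j)$; this is the nontrivial piece of classical number theory behind the statement, while everything else amounts to bookkeeping via Fourier inversion and the Davenport--Hasse lift.
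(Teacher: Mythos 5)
The paper offers no proof of this lemma---it is imported verbatim from Myerson \cite{My}---so the only question is whether your reconstruction would actually work. Its skeleton is correct and is indeed the standard route: the Fourier-inversion formula $\eta_i^{(N,r)}=\frac1N\bigl(-1+\sum_{j=1}^{N-1}\chi^{-j}(\alpha^i)G(\chi^j)\bigr)$ is right, the two orthogonality identities you quote are right (the second needs $N$ even, which case (1) guarantees), and the Gauss-sum values you assert, namely $G(\chi^j)=(-1)^{f+1}\sqrt r$ for all $j$ in case (2) and $G(\chi^j)=(-1)^j\sqrt r$ in case (1), are exactly the uniform-cyclotomy values and do collapse to the stated closed forms. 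I checked this against $(N,p,e,f)=(4,3,1,1)$, $r=9$, where $G(\chi)=G(\chi^3)=-3$, $G(\chi^2)=3$ give $\eta_2=2$ and $\eta_0=\eta_1=\eta_3=-1$ as the lemma predicts.

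There are, however, two gaps. The first is one you acknowledge: the entire arithmetic content of the lemma is the sign of the base Gauss sum over $\Bbb F_{p^{2e}}$, and you wave at it (``Stickelberger's congruence, or a reduction to a quadratic Gauss sum''); as written the proposal is a reduction, not a proof. The second is a genuine error in the mechanism you give for the $j$-dependence in case (1): you attribute the factor $(-1)^j$ to $\chi^j(-1)=-1$. But in the semi-primitive setting $\chi(-1)=1$ always---for $p$ odd, $(r-1)/N=(p^{ef}-1)\cdot\frac{p^{ef}+1}{N}$ (or the analogous factorization when $f$ is even) is even because $p^{ef}\pm1$ are both even, so $-1=\alpha^{(r-1)/2}$ lies in $C_0^{(N,r)}$; for $p=2$ it is trivial. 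Hence every $G(\chi^j)$ is real in both cases, and your criterion would make the sign independent of $j$, contradicting the $(-1)^j$ you need. The correct source of the alternation is the base-field evaluation $G_{\Bbb F_{p^{2e}}}(\chi'^j)=\varepsilon^{j}p^{e}$ with $\varepsilon=(-1)^{(p^e+1)/N}$ for $p$ odd (and $\varepsilon=1$ for $p=2$), which under Davenport--Hasse gives $G(\chi^j)=(-1)^{f-1}\varepsilon^{jf}\sqrt r$; case (1) is precisely the configuration ($p$ odd, $f$ odd, $\varepsilon=-1$) in which $\varepsilon^{jf}=(-1)^j$ survives, and case (2) is everything else. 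Repair that step and supply the Stickelberger computation, and the rest is, as you say, bookkeeping.
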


The Gauss periods in the index $2$ case can be described in the following lemma.

\begin{lem} \cite{DY, My} Let $N > 3$  be a prime with $N \equiv 3 \pmod 4$, $p$ a prime such
that $[\Bbb Z_N^\ast : \langle p \rangle]=2$, and $r=p^{\frac {N-1} 2 k}$ for some positive integer $k$.
Let $h$ be the class number of $\Bbb Q(\sqrt {-N})$ and $a,b$ the integers satisfying
$$\left\{\begin{array}{lll}
4p^h=a^2+Nb^2 \\
a \equiv -2p^{\frac {N-1+2h}4} \pmod N \\
b>0, p \nmid b. \end{array}\right.$$
Then the Gauss periods of order $N$ are given by
$$\left\{\begin{array}{lll}
\eta_0^{(N, r)}=\frac 1 N (P^{(k)}A^{(k)}(N-1)-1) \\
\eta_u^{(N,r)}=\frac {-1} N(P^{(k)}A^{(k)}+P^{(k)}B^{(k)}N+1), \mbox{ if } (\frac u N)=1 \\
 \eta_u^{(N,r)}=\frac {-1} N(P^{(k)}A^{(k)}-P^{(k)}B^{(k)}N+1), \mbox{ if } (\frac u N)=-1, \end{array}\right.$$
 where
 $$\left\{\begin{array}{lll}
P^{(k)}=(-1)^{k-1}p^{\frac k 4(N-1-2h)} \\
A^{(k)}=\mbox{Re}(\frac {a+b\sqrt{-N}} 2)^k \\
B^{(k)}=\mbox{Im}(\frac {a+b\sqrt{-N}} 2)^k/ \sqrt N. \end{array}\right.$$
\end{lem}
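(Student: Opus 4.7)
The plan is to reduce the computation of the Gauss periods to an evaluation of Gauss sums of multiplicative characters of order $N$, and then exploit the fact that $[\Bbb Z_N^\ast : \langle p \rangle] = 2$ forces these Gauss sums to live inside a single imaginary quadratic subfield. First I would fix a multiplicative character $\chi$ of $\Bbb F_r^\ast$ of order $N$ and express each Gauss period by Fourier inversion over $\Bbb F_r^\ast / C_0^{(N,r)}$ as
\[
\eta_u^{(N,r)} \;=\; \frac{1}{N}\sum_{j=0}^{N-1} \chi^{-j}(\alpha^u)\, G(\chi^j),
\]
where $G(\chi^j)=\sum_{x\in\Bbb F_r^\ast}\chi^j(x)\psi(x)$ is the usual Gauss sum and $G(\chi^0)=-1$.

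Second, the index $2$ hypothesis together with $N\equiv 3\pmod 4$ forces $\langle p\rangle$ to coincide with the subgroup of quadratic residues $R\subset \Bbb Z_N^\ast$, and the unique quadratic subfield of $\Bbb Q(\zeta_N)$ fixed by $\langle p\rangle$ is $\Bbb Q(\sqrt{-N})$. Because the Frobenius action $\chi\mapsto \chi^p$ preserves Gauss sums, all values $G(\chi^j)$ for $j\in R$ coincide with a single element $T\in \Bbb Z[\tfrac{1+\sqrt{-N}}{2}]$, and those for $j\in \bar R$ equal its complex conjugate $\bar T$.

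Third, the classical Stickelberger theorem combined with the class number formula for $\Bbb Q(\sqrt{-N})$ produces an element $\tau=\tfrac{a+b\sqrt{-N}}{2}$ with $\tau\bar\tau=p^h$, giving the diophantine identity $4p^h=a^2+Nb^2$; the Stickelberger congruences, supplemented by $b>0$ and $p\nmid b$, remove the unit-and-conjugate ambiguity and force exactly $a\equiv -2p^{(N-1+2h)/4}\pmod N$. The Gauss sums $G(\chi^j)$ with $j\in R$ then equal $P^{(k)}\tau^k$ with the correction factor $P^{(k)}=(-1)^{k-1}p^{\frac{k}{4}(N-1-2h)}$ chosen precisely so that the norm $(P^{(k)})^2(\tau\bar\tau)^k = p^{\frac{k}{2}(N-1-2h)+hk}$ matches $r=p^{k(N-1)/2}$. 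Separating real and imaginary parts of $\tau^k$ extracts exactly $A^{(k)}=\mathrm{Re}(\tau^k)$ and $B^{(k)}=\mathrm{Im}(\tau^k)/\sqrt N$.

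Finally I would substitute everything back into the Fourier-inversion formula. The two partial sums $\sum_{j\in R}\chi^{-j}(\alpha^u)T$ and $\sum_{j\in\bar R}\chi^{-j}(\alpha^u)\bar T$ are governed by the Legendre symbol $\bigl(\tfrac{u}{N}\bigr)$, which produces exactly the trichotomy $u=0$, $\bigl(\tfrac{u}{N}\bigr)=1$, $\bigl(\tfrac{u}{N}\bigr)=-1$ appearing in the statement, while the term $-1/N$ comes from the contribution of $G(\chi^0)=-1$. The main obstacle throughout is the third step: identifying which particular generator of the ideal $\mathfrak{p}^h$ is the one that actually equals the Gauss sum (rather than its conjugate, negative, or a unit multiple) is subtle, and this is precisely what the normalisation $a\equiv -2p^{(N-1+2h)/4}\pmod N$, $b>0$, $p\nmid b$ rigidifies through Stickelberger-type congruences.
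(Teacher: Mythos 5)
The paper offers no proof of this lemma to compare against: it is quoted as a known evaluation of Gauss periods in the index~$2$ case, with the argument delegated entirely to the cited references [DY, My]. Your outline is a faithful reconstruction of the standard proof given in that literature: Fourier inversion over $\Bbb F_r^\ast/C_0^{(N,r)}$ reduces the periods to Gauss sums $G(\chi^j)$ of a character of order $N$ (with the $j=0$ term contributing the $-1/N$); the hypothesis $[\Bbb Z_N^\ast:\langle p\rangle]=2$ with $N\equiv 3\pmod 4$ identifies $\langle p\rangle$ with the quadratic residues $R$, so that $G(\chi^j)$ takes only the two conjugate values $T,\bar T$ in $\Bbb Z[\frac{1+\sqrt{-N}}{2}]$; Stickelberger plus the class number $h$ pin down $T=P^{(k)}\tau^k$ with $\tau=\frac{a+b\sqrt{-N}}{2}$; and the trichotomy in $u$ falls out of the classical quadratic Gauss sum $\sum_{j\in R}\zeta_N^{ju}=\frac{-1+(\frac uN)\sqrt{-N}}{2}$. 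You also correctly single out the genuinely delicate point, namely that the congruence $a\equiv -2p^{(N-1+2h)/4}\pmod N$ together with $b>0$, $p\nmid b$ is what removes the conjugate-and-unit ambiguity in the generator of $\fp^{hk}$.

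Two steps of your sketch are thinner than a complete proof requires. First, the descent of $G(\chi)$ from $\Bbb Q(\zeta_{Np})$ into $\Bbb Q(\sqrt{-N})$ needs, besides invariance under $\sigma_a$ for $a\in\langle p\rangle$, the elimination of the $\zeta_p$-part of the Gauss sum; this uses $\gcd(N,p-1)=1$, which holds here because $\ord_N(p)=\frac{N-1}{2}>1$, but it has to be said. Second, your justification of $P^{(k)}$ purely by matching $|P^{(k)}\tau^k|^2$ with $r$ determines $P^{(k)}$ only up to sign; the factor $(-1)^{k-1}$ comes from the Hasse--Davenport lifting relation $-G_k=(-G_1)^k$ (or from redoing the Stickelberger computation at level $k$), which your outline does not invoke. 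With those two points supplied, the argument is the standard one from the cited sources, up to the usual sign conventions in the choice of $\chi(\alpha)$ and of the square root of $-N$.
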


\section{ Weights of $\mathcal C_{(q, m, n_1,n_2, \ldots, n_u)}$}

In this section, we  present a method to compute the weights of codewords in cyclic code
$\mathcal C_{(q, m, n_1,n_2, \ldots, n_u)}$ as (1.1) if $\gcd(n_i, n_j)=1$ for $1 \leq i \neq j \leq u$.

For any $a_1, a_2, \ldots, a_u \in \Bbb F_r$, the Hamming weight
of $c(a_1, a_2, \ldots, a_u)$ is equal to $$W_H(c(a_1, a_2, \ldots, a_u))= n-Z(r, a_1, a_2, \ldots, a_u),$$ where
$$Z(r, a_1, a_2, \ldots, a_u)=|\{i : \mbox{Tr}_{r/q}(a_1g_1^i+a_2g_2^i+\cdots +a_ug_u^i)=0, 0 \leq i \leq n-1\}|.$$

\begin{thm} Let each $g_i$ be of order $n_i$, $r-1=n_1N_1=\cdots=n_uN_u$ and $\gcd(n_i,n_j)=1$ for $1\le i\ne j\le u$, then the weight of the codeword $c(a_1, a_2, \ldots, a_u)$ as (1.2) is
$$\frac {(q-1)n} q-\frac 1 q \sum_{y \in \Bbb F_q^\ast}
\sum_{x_1 \in C_0^{(N_1, r)}}\psi(y a_1x_1) \cdots
\sum_{x_u \in C_0^{(N_u, r)}}\psi(y a_ux_u).$$
\end{thm}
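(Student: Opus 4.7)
\medskip

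\noindent\textbf{Proof proposal.} The plan is to apply the standard character-sum formula for counting zeros of the trace, then exploit the pairwise coprimality of the $n_j$ via the Chinese Remainder Theorem to factor the resulting inner sum as a product over the subgroups $C_0^{(N_j,r)}$.

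First I would rewrite
\[
Z(r,a_1,\ldots,a_u)=\sum_{i=0}^{n-1}\mathbbm{1}\!\left[\mathrm{Tr}_{r/q}\!\left(\sum_{j=1}^u a_j g_j^i\right)=0\right]
\]
using the orthogonality of the canonical additive character $\phi$ of $\mathbb{F}_q$, namely $\mathbbm{1}[t=0]=\frac{1}{q}\sum_{y\in\mathbb{F}_q}\phi(yt)$. Pulling out the $y=0$ contribution gives $n/q$, and for $y\ne 0$ I would use the standard identity $\phi(y\,\mathrm{Tr}_{r/q}(x))=\psi(yx)$ together with the additivity of $\psi$, producing
\[
Z(r,a_1,\ldots,a_u)=\frac{n}{q}+\frac{1}{q}\sum_{y\in\mathbb{F}_q^\ast}\sum_{i=0}^{n-1}\prod_{j=1}^{u}\psi(y a_j g_j^i).
\]

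The crucial step, which is the only non-routine point, is to recognize that because $\gcd(n_i,n_j)=1$ and $n=n_1n_2\cdots n_u$, the Chinese Remainder Theorem yields a ring isomorphism $\mathbb{Z}/n\mathbb{Z}\xrightarrow{\sim}\prod_{j=1}^{u}\mathbb{Z}/n_j\mathbb{Z}$, so that as $i$ runs over $\{0,1,\ldots,n-1\}$ the tuple $(i\bmod n_1,\ldots,i\bmod n_u)$ runs exactly once over the product. Since $g_j$ has order $n_j$, this says $i\mapsto (g_1^i,\ldots,g_u^i)$ is a bijection from $\{0,\ldots,n-1\}$ onto $\langle g_1\rangle\times\cdots\times\langle g_u\rangle=C_0^{(N_1,r)}\times\cdots\times C_0^{(N_u,r)}$. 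Hence the inner sum factors as
\[
\sum_{i=0}^{n-1}\prod_{j=1}^{u}\psi(ya_j g_j^i)=\prod_{j=1}^{u}\sum_{x_j\in C_0^{(N_j,r)}}\psi(ya_j x_j).
\]

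Finally I would combine these to get $W_H(c(a_1,\ldots,a_u))=n-Z(r,a_1,\ldots,a_u)=\frac{(q-1)n}{q}-\frac{1}{q}\sum_{y\in\mathbb{F}_q^\ast}\prod_{j=1}^u\sum_{x_j\in C_0^{(N_j,r)}}\psi(ya_jx_j)$, which is precisely the claimed formula. The only genuine subtlety is the CRT factorization; the rest is a routine application of orthogonality and the trace--character compatibility. One should also briefly verify $n=n_1\cdots n_u$ (already observed in the introduction via the $\gcd$ identity $\delta n_1\cdots n_u=r-1$) so that the bijection matches the summation range.
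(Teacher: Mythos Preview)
Your proposal is correct and follows essentially the same approach as the paper: both begin with the standard orthogonality computation for $Z(r,a_1,\ldots,a_u)$ and then factor the inner sum using the pairwise coprimality of the $n_j$. The only cosmetic difference is that the paper peels off one factor at a time by writing $i=sn_1'+t$ with $n_1'=n_2\cdots n_u$ and iterating, whereas you invoke the Chinese Remainder Theorem once to get the full bijection $\{0,\ldots,n-1\}\to C_0^{(N_1,r)}\times\cdots\times C_0^{(N_u,r)}$; these are the same argument presented inductively versus globally.
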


 \begin{proof}
 Let $\phi$ be the canonical additive character of $\Bbb F_q$. Then $\psi=\phi \circ \mbox{Tr}_{r/q}$
is the canonical additive character of $\Bbb F_r$. By the orthogonal property of additive characters we have
\begin{eqnarray*} Z(r,a_1, a_2, \ldots, a_u)&=&\sum_{i=0}^{n-1} \frac  1 q \sum_{y \in \Bbb F_q}
\phi(y \mbox{Tr}_{r/q}(a_1g_1^i+a_2g_2^i+\cdots +a_ug_u^i)) \\
&=& \frac n q +\frac 1 q \sum_{y \in \Bbb F_q^\ast}\sum_{i=0}^{n-1}\phi (\mbox{Tr}_{r/q}(ya_1g_1^i+ya_2g_2^i+\cdots +ya_ug_u^i)) \\
&=& \frac n q +\frac 1 q \sum_{y \in \Bbb F_q^\ast}\sum_{i=0}^{n-1}\psi(ya_1g_1^i+ya_2g_2^i+\cdots +ya_ug_u^i) \\
&=& \frac n q +\frac 1 q \sum_{y \in \Bbb F_q^\ast}\sum_{i=0}^{n-1}\psi( ya_1g_1^i)\psi(ya_2g_2^i+\cdots +ya_{u}g_{u}^i).
\end{eqnarray*}

Denote $n_1'=n_2 \cdots n_u$. For $0 \leq i \leq n-1$, we have
$$i=sn_1'+t, 0 \leq s \leq n_1-1, 0 \leq t \leq n_1'-1.$$
 Then
\begin{eqnarray*} Z(r,a_1, a_2, \ldots, a_u) &=&
\frac n q +\frac 1 q \sum_{y \in \Bbb F_q^\ast}
\sum_{s=0}^{n_1-1}\sum_{t=0}^{n_1'-1}\psi(ya_1g_1^{sn_1'+t})\psi(ya_2g_2^t+\cdots +ya_ug_u^t))\\
&=& \frac n q +\frac 1 q \sum_{y \in \Bbb F_q^\ast}
\sum_{t=0}^{n_1'-1}\psi(ya_2g_2^t+\cdots +ya_ug_u^t)\sum_{s=0}^{n_1-1}\psi(y a_1g_1^{sn_1'+t}).
\end{eqnarray*}

Since $\gcd(n_1, n_1')=1$, for any fixed $t$ we have the inner sum $$\sum_{s=0}^{n_1-1}\psi(y a_1g_1^{sn_1'+t})=\sum_{s=0}^{n_1-1}\psi(y a_1g_1^s).$$
 Thus we have
\begin{eqnarray*} Z(r,a_1, a_2, \ldots, a_u)&=& \frac n q +\frac 1 q \sum_{y \in \Bbb F_q^\ast}
\sum_{s=0}^{n_1-1}\psi(y a_1g_1^s)\sum_{t=0}^{n_1'-1}\psi(ya_2g_2^t+\cdots +ya_ug_u^t) \\
&=& \frac n q +\frac 1 q \sum_{y \in \Bbb F_q^\ast}
\sum_{x_1 \in C_0^{(N_1, r)}}\psi(y a_1x_1)\sum_{t=0}^{n_1'-1}\psi(ya_2g_2^t+\cdots +ya_ug_u^t).
\end{eqnarray*}
Similarly, we have
\begin{equation}Z(r,a_1, a_2, \ldots, a_u)=\frac n q+
\frac 1 q \sum_{y \in \Bbb F_q^\ast}
\sum_{x_1 \in C_0^{(N_1, r)}}\psi(y a_1x_1) \cdots
\sum_{x_u \in C_0^{(N_u, r)}}\psi(y a_ux_u).\end{equation}
This completes the proof.
\end{proof}
\begin{cor} The assumptions are as Theorem 3.1.    If $q=2$, then the weight of codeword $c(a_1,\ldots, a_u)$ is $$\frac n2-\frac 12\overline{\eta}_{a_1}\cdots \overline{\eta}_{a_u},$$ where
$$\overline{\eta}_{a_i}=\left\{\begin{array}{ll} n_i,&\mbox{ if }a_i=0, \\ \eta_j^{(N_i, r)}, &\mbox{ if }a_i\in C_{j}^{(N_i, r)}\end{array}\right.$$
\end{cor}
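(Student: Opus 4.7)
The plan is to specialize the formula from Theorem 3.1 and then recognize each inner sum as either the size of a cyclotomic coset or a Gauss period. Since $q=2$, the set $\mathbb{F}_q^\ast$ reduces to the single element $y=1$, so the outer sum over $y$ collapses. This immediately gives
$$W_H(c(a_1,\ldots,a_u)) = \frac{n}{2} - \frac{1}{2}\prod_{i=1}^{u}\Bigl(\sum_{x_i \in C_0^{(N_i,r)}} \psi(a_i x_i)\Bigr).$$
It then suffices to identify each factor with $\overline{\eta}_{a_i}$.

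For a fixed index $i$, I split into cases on $a_i$. If $a_i=0$ then $\psi(a_i x_i)=\psi(0)=1$ for every $x_i$, so the inner sum equals $|C_0^{(N_i,r)}| = n_i$, matching the first case of $\overline{\eta}_{a_i}$. If $a_i \neq 0$, write $a_i = \alpha^j \alpha^{kN_i}$ for some integer $k$ and some $j$ with $0 \le j \le N_i-1$, so that $a_i \in C_j^{(N_i,r)}$. Multiplication by $a_i$ sends $C_0^{(N_i,r)} = \langle \alpha^{N_i}\rangle$ bijectively onto $C_j^{(N_i,r)}$, hence
$$\sum_{x_i \in C_0^{(N_i,r)}} \psi(a_i x_i) = \sum_{z \in C_j^{(N_i,r)}} \psi(z) = \eta_j^{(N_i,r)},$$
matching the second case.

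Substituting these evaluations factor by factor into the displayed product yields $\prod_{i=1}^u \overline{\eta}_{a_i}$, which gives the claimed weight formula. No step here presents a real obstacle: the only subtlety worth double-checking is that the shift $a_i \mapsto a_i C_0^{(N_i,r)}$ lands in the correct cyclotomic class $C_j^{(N_i,r)}$, but this is immediate from the definition $C_j^{(N_i,r)} = \alpha^j C_0^{(N_i,r)}$ and the fact that $\alpha^{kN_i} \in C_0^{(N_i,r)}$.
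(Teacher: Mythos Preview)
Your proof is correct and follows exactly the same approach as the paper: specialize Theorem~3.1 to $q=2$ so that the sum over $y\in\mathbb F_q^\ast$ collapses, then identify each inner sum as either $n_i$ or a Gauss period. The paper's own proof is in fact terser than yours---it stops after writing down the collapsed expression for $Z(r,a_1,\ldots,a_u)$ and leaves the identification of the factors with $\overline{\eta}_{a_i}$ implicit---so your version simply fills in the details the paper omits.
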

\begin{proof} Note that $q=2$ and $\Bbb F_2^*=\{1\}$. Then by Theorem 3.1 we have
$$Z(r,a_1, a_2, \ldots, a_u)=\frac n 2+
\frac 1 2
\sum_{x_1 \in C_0^{(N_1, r)}}\psi( a_1x_1) \cdots
\sum_{x_u \in C_0^{(N_u, r)}}\psi(a_ux_u).$$
This completes the proof.
\end{proof}

 \section{Weight distribution of $\mathcal C_{(q, m, n_1, n_2)}$}

In this section, we shall determine the weight distribution of
$\mathcal C_{(q, m, n_1, n_2)}$ with $\gcd(n_1, n_2)=1$,
where $$\mathcal C_{(q, m, n_1, n_2)}=\{c(a,b) : a, b \in \Bbb F_r\}$$ and
$$c(a,b)=(\mbox{Tr}_{r/q}(ag_1^0+bg_2^0), \mbox{Tr}_{r/q}(ag_1^1+bg_2^1) \ldots, \mbox{Tr}_{r/q}(ag_1^{n_1n_2-1}+bg_2^{n_1n_2-1})).$$
It follows from Delsarte's Theorem \cite{D} that the code $\mathcal C_{(q, m, n_1,n_2)}$ is a cyclic code
over $\Bbb F_q$ with the parity-check polynomial $h_1(x)h_2(x)$, where $h_1(x)$ and
$h_2(x)$ are the minimal polynomial of $g_1^{-1}$ and $g_2^{-1}$ over $\Bbb F_q$, respectively.
Note that $c(a_1, b_1)=c(a_2, b_2)$ may appear even if
$(a_1, b_1)\neq (a_2, b_2)$. For example, if $g_1=1$, $a_1 \neq a_2$ and  $\mbox{Tr}_{r/q}(a_1)=\mbox{Tr}_{r/q}(a_2)$, then
$c(a_1, b)=c(a_2, b)$.

For any $a, b \in \Bbb F_r$, the Hamming weight
of $c(a, b)$ is equal to $$W_H(c(a, b))= n-Z(r, a, b),$$ where
$$Z(r, a, b)=|\{i : \mbox{Tr}_{r/q}(ag_1^i+bg_2^i)=0, 0 \leq i \leq n_1n_2-1\}|.$$

For future use, we introduce a lemma about Abelian group.
\begin{lem}  \cite{J} Let $H$ and $K$ be two subgroups of a finite Abelian group $G$.

(1) Then $h_1K=h_2K$  if and only if  $ h_1(H \cap K)=h_2(H \cap K)$
for $h_1, h_2 \in H$. Moreover, there is an isomorphism of  groups:
$HK/K \cong H/(H \cap K)$  and  $[HK : K]=[H :(H \cap K)],$
where $HK=\{hk : h \in H, k \in K\}.$

(2)
Then  there is an isomorphism of groups: $(H \times K)/\Delta \cong HK,$
where $H \times K=\{(h, k) : h \in H, k \in K\}$ and
$\Delta=\{(h, h^{-1}) : h \in H \cap K\}$.
Moreover,
$|H|\cdot|K|=|HK|\cdot|H\cap K|.$
\end{lem}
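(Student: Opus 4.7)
The plan is to reduce both parts to applications of the first isomorphism theorem after constructing natural surjective homomorphisms; the abelian hypothesis ensures every subgroup is normal and that $HK$ itself is a subgroup of $G$, so the quotients in the statement make sense without any extra checking.

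For part (1), I would first establish the coset equivalence directly. Given $h_1, h_2 \in H$, the identity $h_1 K = h_2 K$ is equivalent to $h_2^{-1} h_1 \in K$; but $h_2^{-1} h_1$ lies in $H$ automatically, so this is the same as $h_2^{-1} h_1 \in H \cap K$, which says $h_1(H \cap K) = h_2(H \cap K)$. Next I would define $\phi : H \to HK/K$ by $\phi(h) = hK$. Every coset of $HK/K$ admits a representative in $H$ since $(hk)K = hK$, so $\phi$ is surjective; its kernel is $\{h \in H : h \in K\} = H \cap K$. The first isomorphism theorem then gives $HK/K \cong H/(H \cap K)$, and the index identity $[HK:K] = [H : H \cap K]$ is immediate.

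For part (2), I would define $\psi : H \times K \to HK$ by $\psi(h, k) = hk$. Commutativity of $G$ makes $\psi$ a homomorphism, and surjectivity is built into the definition of $HK$. The kernel consists of pairs $(h,k)$ with $hk = 1$, equivalently $k = h^{-1}$ with $h \in H$ and $h^{-1} \in K$, which forces $h \in H \cap K$; hence $\ker \psi = \Delta$. Applying the first isomorphism theorem yields $(H \times K)/\Delta \cong HK$, and comparing orders together with $|\Delta| = |H \cap K|$ produces $|H|\cdot|K| = |HK|\cdot|H \cap K|$.

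This is the standard second isomorphism theorem specialised to abelian groups, so no real obstacle is expected. The only care needed is to invoke commutativity at the two precise points where it is used, namely in verifying that $HK$ is a subgroup and that $\psi$ is multiplicative, and to identify the two kernels correctly. Everything else is routine coset bookkeeping.
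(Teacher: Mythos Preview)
Your argument is correct and is the textbook route: the coset equivalence in (1) is just unwinding the definition, the map $h\mapsto hK$ gives the second isomorphism theorem, and the multiplication map $(h,k)\mapsto hk$ with kernel $\Delta$ handles (2). Each step is sound, and you have correctly flagged the two places where commutativity enters.

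As for comparison: the paper does not supply its own proof of this lemma. It is stated with a citation to Jacobson's \emph{Basic Algebra I} and used as a black box in the proofs of Theorems~4.2 and~5.1. So there is nothing in the paper to compare against; your write-up simply fills in what the authors left to the reference, and does so by the standard argument one would find there.
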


\begin{thm}
 If $\gcd(n_1, n_2)=1$, $N_0=\frac {r-1} {q-1}, d_1=\gcd(N_0, N_1), d_2=\gcd(N_0, N_2)$, and
 $d=\gcd(\frac {N_0N_2} {d_2}, N_1)$, then the weight distribution of the cyclic code $\mathcal C_{(q, m, n_1, n_2)}$ is given by Table 1.
\end{thm}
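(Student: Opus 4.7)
The plan is to apply Theorem~3.1 in the case $u=2$, which gives
\[
W_H(c(a,b)) \;=\; \frac{(q-1)n}{q} \;-\; \frac{1}{q}\sum_{y \in \mathbb{F}_q^*} T_1(ya)\,T_2(yb), \qquad T_i(x) = \sum_{z \in C_0^{(N_i,r)}}\psi(xz),
\]
and then to evaluate the right-hand side by using that $T_i(0) = n_i$ while $T_i(x) = \eta_\ell^{(N_i,r)}$ for $x \in C_\ell^{(N_i,r)}$. Writing $G_i = C_0^{(N_i,r)}$, so that in particular $G_0 = \mathbb{F}_q^* = C_0^{(N_0,r)}$, the task reduces to counting, for each pair of cyclotomic indices $(i,j)$, the cardinality of the three-fold coset intersection $\{y \in G_0 : ya \in C_i^{(N_1,r)},\ yb \in C_j^{(N_2,r)}\}$ inside the cyclic group $\mathbb{F}_r^*$. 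Lemma~4.1 is designed exactly for this, and the parameters $d_1, d_2, d$ will enter as indices of the relevant subgroup products.

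I would dispose of the degenerate pairs first. The pair $(0,0)$ produces the zero codeword; for $a = 0$, $b \in C_{s_2}^{(N_2,r)} \setminus \{0\}$, the sum collapses to $n_1\sum_{y \in G_0}\eta_{j(yb)}^{(N_2,r)}$, and Lemma~4.1 applied to $(G_0, G_2)$ shows that the coset $bG_0$ meets exactly $N_2/d_2$ of the classes $C_j^{(N_2,r)}$ (those with $d_2 \mid j - s_2$), each in $|G_0 \cap G_2|$ elements, so the weight is expressed as a partial sum of Gauss periods of order $N_2$ along an arithmetic progression modulo $d_2$. The case $b = 0$, $a \neq 0$ is symmetric with $N_1, d_1$ in place of $N_2, d_2$.

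The generic case $a, b \in \mathbb{F}_r^*$ with $a \in C_{s_1}^{(N_1,r)}$, $b \in C_{s_2}^{(N_2,r)}$ is the heart of the argument. I would stratify by $j$: for admissible $j$ with $d_2 \mid j - s_2$, the solutions $y \in G_0$ to $yb \in C_j^{(N_2,r)}$ form a single coset $S_j$ of $G_0 \cap G_2$ of size $|G_0 \cap G_2|$. Multiplying by $a$ gives another coset $aS_j$ of $G_0 \cap G_2$, and its image modulo $G_1$ is exactly one coset of $(G_0 \cap G_2) G_1/G_1$ in $\mathbb{F}_r^*/G_1$. By Lemma~4.1 the product satisfies $(G_0 \cap G_2) G_1 = \langle \alpha^d\rangle$ with $d = \gcd(N_0 N_2/d_2, N_1) = \gcd(\mathrm{lcm}(N_0, N_2), N_1)$, so the image hits exactly $N_1/d$ cyclotomic classes $C_i^{(N_1,r)}$, each with uniform multiplicity $|G_0 \cap G_1 \cap G_2|$. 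Consequently the stratified inner sum becomes $|G_0 \cap G_1 \cap G_2|\sum_{i \equiv i_0(j) \pmod d}\eta_i^{(N_1,r)}$ for some specific offset $i_0(j)$ determined by $s_1$, $s_2$, and $j$.

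With this structural description in hand, Table~1 is assembled by classifying pairs $(a, b) \in \mathbb{F}_r \times \mathbb{F}_r$ into orbits under $\mathbb{F}_q^*$ according to the residues $(s_1 \bmod d_1, s_2 \bmod d_2)$ together with a joint residue controlling $i_0(j)$, reading off orbit sizes from Lemma~4.1, and accounting for the noninjectivity of $(a, b) \mapsto c(a, b)$ (whose kernel is an $\mathbb{F}_q$-subspace of $\mathbb{F}_r^2$ of dimension $2m - k$) when passing from orbit counts to codeword frequencies. The main obstacle will be the combinatorial bookkeeping: tracing how the offset $i_0(j)$ varies with $j$ within each orbit, verifying that the resulting combinations of Gauss-period products collapse to the weights listed in the table, and checking that the frequencies indeed sum to $q^k$. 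Once that is carried out, the theorem follows by inspection.
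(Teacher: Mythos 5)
Your proposal follows essentially the same route as the paper: apply Theorem~3.1 with $u=2$, dispose of the cases with $a=0$ or $b=0$ via Lemma~4.1, and in the generic case stratify the sum over $y\in\Bbb F_q^*$ by cosets of $C_0^{(N_0,r)}\cap C_0^{(N_2,r)}$ so that the inner sums collapse to products $\eta^{(N_2,r)}\eta^{(d,r)}$ with the multiplicities $\frac{(q-1)d_2}{N_2}$ and $\frac{(q-1)dd_2}{N_1N_2}$ coming from the coset-counting lemma. Your indexing (by the class of $yb$ rather than by cosets inside $\Bbb F_q^*$) is a cosmetic reparametrization of the paper's computation, and your sums of $\eta_i^{(N_1,r)}$ over residues mod $d$ are exactly the single periods $\eta^{(d,r)}$ appearing in Table~1; the remaining bookkeeping you defer is carried out in the paper and works as you anticipate (the paper counts frequencies as pairs $(a,b)$, summing to $r^2$).
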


\[ \begin{tabular} {c} Table 1. Weight distribution of $\mathcal C_{(q, m,n_1, n_2)}$ when $\gcd(n_1, n_2)=1$ \\
\begin{tabular}{|c|c|}
  \hline
 Weight & Frequency \\
  \hline
        0   &    1\\
  $\frac {(q-1)n_1n_2} q -\frac {(q-1)d_1n_2} {qN_1} \eta_j^{(d_1, r)}$  & $\frac {r-1} {d_1}(0\leq j \leq d_1-1)$ \\
  $\frac {(q-1)n_1n_2} q -\frac {(q-1)d_2n_1} {qN_2} \eta_j^{(d_2, r)}$  & $\frac {r-1} {d_2}(0\leq j \leq d_2-1)$ \\
  $\frac {(q-1)n_1n_2} q -\frac {(q-1)dd_2} {qN_1N_2}
  \sum\limits_{i=0}^{\frac {N_2} {d_2}-1} \eta_{N_0i+j}^{(N_2, r)} \eta_{N_0i+k}^{(d, r)}$
  & $\frac {(r-1)^2} {dN_2}(\begin{array}{c} 0\leq j \leq N_2-1 \\0\leq k \leq d-1\end{array})$ \\
  \hline
\end{tabular}
\end{tabular}
\]

\begin{proof} By Theorem 3.1 we have
\begin{eqnarray*} Z(r,a,b)=\frac {n_1n_2} q +\frac 1 q \sum_{y \in \Bbb F_q^\ast}\sum_{x \in C_0^{(N_1, r)}}\psi(y ax)
\sum_{z \in C_0^{(N_2, r)}}\psi(y b z).
\end{eqnarray*}

We will compute the values of $Z(r, a, b)$ in the following four cases.

(1) If $a=0$ and $b=0$, then
$$Z(r, a, b)=n_1n_2.$$
This value occurs only once.

(2) If $a \neq 0$ and $b=0$. Suppose that $a\in C_j^{(d_1, r)}$,  $0\le j\le d_1-1$, where $d_1=\gcd(N_0, N_1)$.  Then
\begin{eqnarray*} Z(r,a,b)= \frac {n_1n_2} q +\frac {n_2} q \sum_{y \in \Bbb F_q^\ast}\sum_{x \in C_0^{(N_1, r)}}\psi(y ax).
\end{eqnarray*}
For each  $yx\in \Bbb F_q^*\cdot C_0^{(N_1, r)}=C_0^{(N_0,r)}\cdot C_0^{(N_1,r)}=C_0^{(d_1,r)}$, by Lemma 4.1 there exist exactly $\frac {(q-1)d_1} {N_1}$ elements  $h\in  \Bbb F_q^*\cap C_0^{(N_1, r)}$ such that $yh\cdot h^{-1}x=yx$.
Hence
we have
\begin{eqnarray*}
Z(r, a, b)&=&\frac {n_1n_2} q +\frac {n_2} q \cdot \frac {(q-1)d_1} {N_1}\sum_{x \in C_0^{(d_1, r)}} \psi(ax) \\
&=& \frac {n_1n_2} q +\frac {(q-1)d_1n_2} {qN_1} \eta_j^{(d_1, r)}.
\end{eqnarray*}
This value occurs $\frac {r-1} {d_1}$ times.

(3) If $a=0$ and $b \neq 0$. Suppose that $b\in C_j^{(d_2, r)}$,  $0\le j\le d_2-1$, where $d_2=\gcd(N_0, N_2)$.
Similarly, we have
\begin{eqnarray*} Z(r,a,b)&= &\frac {n_1n_2} q +\frac {n_1} q \sum_{y \in \Bbb F_q^\ast}\sum_{z \in C_0^{(N_2, r)}}\psi(y b z).
\\ &=&\frac {n_1n_2} q +\frac {n_1} q \cdot \frac {(q-1)d_2} {N_2}\sum_{z \in C_0^{(d_2, r)}} \psi(b z) \\
&=& \frac {n_1n_2} q +\frac {(q-1)d_2n_1} {qN_2} \eta_j^{(d_2, r)}.
\end{eqnarray*}
This value occurs $\frac {r-1} {d_2}$ times.

(4) If $a \neq 0, b \neq 0$. Suppose that  $a \in C_k^{(d, r)}$ and $b \in C_{j}^{(N_2, r)}$, $0\le k\le d-1$, $0\le j\le N_2-1$, where
$d=\gcd(\frac{N_0N_2}{d_2}, N_1)$.

Set $H=\Bbb F_q^\ast=C_0^{(N_0,r)}=\langle\alpha^{N_0}\rangle$ and $K=C_0^{(N_2, r)}=\langle \alpha^{N_2} \rangle$, where $\alpha$ is a primitive element of $\Bbb F_r$ and $N_0=\frac {r-1} {q-1}$. Then
$$HK=\langle \alpha^{d_2} \rangle, H \cap K=\langle \alpha^{\frac {N_0N_2} {d_2}} \rangle,$$
where $d_2=\gcd(N_0, N_2)$. By Lemma 4.1 we have
$[HK : K]=\frac {N_2} {d_2}$, $\frac{N_2}{d_2} \mid (q-1)$ and
$$\Bbb F_q^\ast=C_0^{(N_0, r)}=\cup_{i=0}^{\frac {N_2} {d_2}-1}C_{N_0i}^{( \frac {N_0N_2} {d_2}, r)},$$ where
$$C_{N_0 i}^{(\frac {N_0N_2} {d_2}, r)}
=\alpha^{N_0i}\langle \alpha^{\frac {N_0N_2} {d_2} } \rangle
=\alpha^{N_0i}(H \cap K)
, 0 \leq i \leq \frac {N_2} {d_2}-1.$$
Then by Lemma 4.1  we have
$$HK=\Bbb F_q^\ast \cdot C_0^{(N_2, r)}=\cup_{i=0}^{\frac {N_2} {d_2}-1}\alpha^{N_0i}C_0^{(N_2, r)}
=\cup_{i=0}^{\frac {N_2} {d_2}-1}C_{N_0i}^{(N_2, r)},$$ where
$C_{N_0i}^{(N_2, r)}=C_{N_0i \pmod {N_2}}^{(N_2, r)}$ .
We have
\begin{eqnarray} \Omega \notag &=&\sum\limits_{y \in \Bbb F_q^\ast}
\sum\limits_{x \in C_0^{(N_1, r)}}\psi(y ax)\sum\limits_{z \in C_0^{(N_2, r)}}\psi(y b z)\\ \notag
&=& \sum\limits_{i=0}^{\frac {N_2} {d_2}-1}
\sum\limits_{y \in C_{N_0i}^{(\frac {N_0N_2} {d_2}, r)}}
(\sum\limits_{x \in C_0^{(N_1, r)}}\psi(y ax))(\sum\limits_{z \in C_0^{(N_2, r)}}\psi(y b z))\\ \notag
&=& \sum\limits_{i=0}^{\frac {N_2} {d_2}-1} \sum\limits_{y \in C_{N_0i}^{(\frac {N_0N_2} {d_2}, r)}}
(\sum\limits_{x \in C_0^{(N_1, r)}}\psi(y ax))(\sum\limits_{z \in C_{N_0i}^{(N_2, r)}}\psi( b z)) \\
&=& \sum\limits_{i=0}^{\frac {N_2} {d_2}-1}\eta_{N_0i+j}^{(N_2, r)}
\sum\limits_{y \in C_{N_0i}^{(\frac {N_0N_2} {d_2}, r)}}\sum\limits_{x \in C_0^{(N_1, r)}}\psi(y ax).
\end{eqnarray}
It is easily known that $C_{N_0i}^{(\frac {N_0N_2} {d_2}, r)} \cdot C_0^{(N_1, r)}=C_{N_0i}^{(d,r)}$.
For each $yx \in C_{N_0i}^{(d,r)}$, by Lemma 4.1 there exist exactly $\frac {(q-1)dd_2} {N_1N_2}$ elements
$h \in C_0^{(\frac {N_0N_2}{d_2}, r)} \cap C_0^{(N_1, r)}$ such that $yh \cdot  h^{-1}x=yx$. Then we have

$$\Omega= \sum_{i=0}^{\frac {N_2} {d_2}-1}\eta_{N_0i+j}^{(N_2, r)}
(\frac {(q-1)dd_2} {N_1N_2} \sum_{z \in C_{N_0i}^{(d, r)}} \psi(az)) \\
= \frac {(q-1)dd_2} {N_1N_2}\sum_{i=0}^{\frac {N_2} {d_2}-1}\eta_{N_0i+j}^{(N_2, r)}
\eta_{N_0i+k}^{(d, r)}.$$

Hence
$$Z(r, a, b)=\frac {n_1n_2} q+\frac {(q-1)dd_2} {qN_1N_2}\sum_{i=0}^{\frac {N_2} {d_2}-1}\eta_{N_0i+j}^{(N_2, r)}
\eta_{N_0i+k}^{(d, r)}.$$ This value occurs $\frac {(r-1)^2}{dN_2}$ times.

 Note that $W_H(c(a, b))= n-Z(r, a, b)$. Then we can obtain the Table 1 and this completes the proof.
\end{proof}

\begin{exa}
Let $(q, m, n_1, n_2)=(4, 2, 5, 3)$. Then we have
$r=16, N_1=3, N_2=5, d_1=1, d_2=5$, and $d=1$.
By Lemma 2.2 we have
$$\eta_0^{(5, 16)}=3, \eta_i^{(5, 16)}=-1 \mbox{ for } i=1, 2, 3, 4.$$
Note that $\eta_0^{(1, 16)}=-1$. Then by Table 1 we know that
$\mathcal C_{(q, m, n_1, n_2)}$ is a $[15, 3, 11]$ cyclic code
over $\Bbb F_4$ with the weight enumerator
$$1+45x^{11}+15x^{12}+3x^{15}.$$
\end{exa}

\begin{exa}
Let $(q, m, n_1, n_2)=(8, 2, 9, 7)$. Then we have
$r=64, N_1=7, N_2=9, d_1=1, d_2=9$, and $d=1$.
By Lemma 2.2 we have
$$\eta_0^{(9, 64)}=7, \eta_i^{(9, 64)}=-1 \mbox{ for } i=1, 2, \ldots, 8.$$
Note that $\eta_0^{(1, 16)}=-1$. Then by Table 1 we know that
$\mathcal C_{(q, m, n_1, n_2)}$ is a $[63, 3, 55]$ cyclic code
over $\Bbb F_8$ with the weight enumerator
$$1+441x^{55}+63x^{56}+7x^{63}.$$
\end{exa}

\begin{exa}
Let $(q, m, n_1, n_2)=(9, 2, 5, 16)$. Then we have
$r=81, N_1=16, N_2=5, d_1=2, d_2=5$, and $d=2$.
By Lemma 2.1 we have
$$\eta_0^{(2, 81)}=-5, \eta_1^{(2, 81)}=4.$$
By Lemma 2.2 we have
$$\eta_0^{(5, 81)}=7, \eta_i^{(5, 81)}=-2 \mbox{ for } i=1, 2, 3, 4.$$
Note that $\eta_0^{(1, 81)}=-1$. Then by Table 1 we know that
$\mathcal C_{(q, m, n_1, n_2)}$ is a $[80, 4, 40]$ cyclic code
over $\Bbb F_9$ with the weight enumerator
$$1+16x^{40}+40x^{64}+640x^{68}+2560x^{70}+2560x^{72}+640x^{75}+104x^{80}.$$
\end{exa}

\begin{exa}
Let $(q, m, n_1, n_2)=(7, 2, 3, 16)$. Then we have
$r=49, N_1=16, N_2=3, d_1=8, d_2=1$, and $d=8$.
By Lemma 2.2 we have
$$\eta_4^{(8, 49)}=6, \eta_i^{(8, 49)}=-1 \mbox{ for } i \neq 4.$$
Note that $\eta_0^{(1, 81)}=-1$ and $\eta_0^{(3, 49)}+\eta_1^{(3, 49)}+\eta_2^{(3, 49)}=-1$. Then by Table 1 we know that
$\mathcal C_{(q, m, n_1, n_2)}$ is a $[48, 3, 41]$ cyclic code
over $\Bbb F_7$ with the weight enumerator
$$1+288x^{41}+48x^{42}+6x^{48}.$$
\end{exa}

If one of $n_1$ and $n_2$ is equal to 1, without loss of generality we assume that $n_1=1$, then
we have $N_1=r-1, d_1=N_0$, and $d=\frac {N_0N_2} {d_2}$ since
$\frac {N_2} {d_2} \mid (q-1)$. Thus we can get the weight distribution
of the cyclic code $\mathcal C_{(q, m, 1, n_2)}$. In fact, the first author and the second author \cite{LY}
had presented its weight distribution. Theorem 4.2 can be viewed as the generalization of the result
in \cite{LY}.
\begin{cor}
 Note that $c(a_1, b)=c(a_2, b)$ if $\mbox{Tr}_{r/q}(a_1)=\mbox{Tr}_{r/q}(a_2)$.
 Then the weight distribution of $\mathcal C_{(q, m, 1, n_2)}$ is given by Table 2.
\end{cor}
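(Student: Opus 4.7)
The plan is to derive Corollary 4.3 by specializing Theorem 4.2 to $n_1=1$ and then correcting the frequencies for the non-injectivity of the parameterization $(a,b)\mapsto c(a,b)$. First I would record the parameter simplifications: when $n_1=1$ we have $N_1=r-1$ and $g_1=\alpha^{r-1}=1$, which forces $d_1=\gcd(N_0,N_1)=N_0$ since $N_0\mid r-1$, while the identity $d=\gcd(N_0N_2/d_2,N_1)=N_0N_2/d_2$ follows from the divisibility $N_2/d_2\mid q-1$ (established inside the proof of Theorem 4.2 via $[HK:K]=N_2/d_2$), giving $N_0N_2/d_2\mid N_0(q-1)=r-1$. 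Once these values are substituted, the weight formulas of Table 1 specialize automatically.

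Next I would invoke the remark already recorded in the corollary: because $g_1=1$, the $i$-th coordinate of $c(a,b)$ equals $\mbox{Tr}_{r/q}(a)+\mbox{Tr}_{r/q}(bg_2^i)$ and therefore depends on $a$ only through $\mbox{Tr}_{r/q}(a)$. Consequently each pair $(a,b)$ shares its codeword with exactly $q^{m-1}-1$ others obtained by translating $a$ by a trace-zero element. Accounting for this collapse divides the pair-frequencies of Table 1 by $q^{m-1}$ to produce the codeword frequencies of Table 2, with the $q^{m-1}$ pairs $(a,0)$ satisfying $\mbox{Tr}_{r/q}(a)=0$ consolidating into the single zero codeword.

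The last step is to consolidate the rows of Table 1 that now collide. The second row, corresponding to $b=0$ and $a\in C_j^{(N_0,r)}=\alpha^j\mathbb{F}_q^\ast$, simplifies by direct evaluation of the Gauss period: additive-character orthogonality of $\phi$ over $\mathbb{F}_q$ yields $\eta_j^{(N_0,r)}=q-1$ when $\mbox{Tr}_{r/q}(\alpha^j)=0$ and $\eta_j^{(N_0,r)}=-1$ otherwise. The trace-zero subclasses contribute only the zero codeword and must be absorbed into the weight-$0$ row rather than double-counted, while the remaining classes merge into a single weight-$n_2$ entry. The third and fourth rows transfer to Table 2 after the same multiplicity correction and substitution of the simplified $d_1$ and $d$. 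The main obstacle is the bookkeeping for this absorption and merging: one must carefully track which cyclotomic classes $C_j^{(N_0,r)}$ consist entirely of trace-zero elements, and verify that the adjusted frequencies remain non-negative integers summing to the full code size.
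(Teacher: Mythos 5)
Your overall strategy---specialize Table 1 at $n_1=1$ and then quotient by the collapse $c(a_1,b)=c(a_2,b)$ for $\mbox{Tr}_{r/q}(a_1)=\mbox{Tr}_{r/q}(a_2)$---is different from the paper's, which does not pass through Table 1 at all but re-derives the four cases directly from Theorem 3.1, stratifying by whether $\mbox{Tr}_{r/q}(a)$ and $b$ vanish. Your parameter simplifications $N_1=r-1$, $d_1=N_0$, $d=\frac{N_0N_2}{d_2}$ are correct, and your evaluation of $\eta_j^{(N_0,r)}$ via orthogonality for the $b=0$ row is exactly the right move there. However, as written the proposal has a genuine gap in the only nontrivial case, the fourth row of Table 1 ($a\neq 0$, $b\neq 0$), which you dismiss with ``the same multiplicity correction.'' Two things are missing. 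First, each class $C_k^{(d,r)}=\alpha^k\langle\alpha^{N_0N_2/d_2}\rangle$ lies in a single trace-fiber (since $\langle\alpha^{N_0N_2/d_2}\rangle\subset\Bbb F_q^\ast$), so this row must be \emph{split}: the classes with $\mbox{Tr}_{r/q}(\alpha^k)=0$ merge into the third row of Table 2, and only the remainder produces the fourth row; a uniform division of the Table 1 frequency $\frac{(r-1)^2}{dN_2}$ by $q^{m-1}$ does not yield $\frac{d_2(q-1)(r-1)}{N_2^2}$ (the two differ by a factor of $q/r$), precisely because the regrouping must precede the division. Second, Table 2 expresses the weight through Gauss periods $\eta_{i+l}^{(N_2/d_2,\,q)}$ over the \emph{base field}, so you must prove the conversion identity
$$\eta_{N_0i+k}^{(\frac{N_0N_2}{d_2},\,r)}=\sum_{y\in C_i^{(\frac{N_2}{d_2},\,q)}}\phi\bigl(y\,\mbox{Tr}_{r/q}(\alpha^k)\bigr)=\eta_{i+l}^{(\frac{N_2}{d_2},\,q)}\quad\text{for }\mbox{Tr}_{r/q}(\alpha^k)\in C_l^{(\frac{N_2}{d_2},\,q)},$$
which is the key computation in the paper's case (4) and appears nowhere in your plan. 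This identity is also what shows that the many classes $C_k^{(d,r)}$ sharing a trace-class $l$ give equal weights, justifying the merge. Without these two steps the entries and frequencies of Table 2 cannot be reached, so the ``bookkeeping'' you defer is in fact the substance of the proof.
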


\[ \begin{tabular} {c} Table 2. Weight distribution of $\mathcal C_{(q, m, n_1, 1)}.$\\
\begin{tabular}{|c|c|}
  \hline
 Weight & Frequency  \\
  \hline
        0   &    1 \\
  $n_2$ & $q-1$ \\
  $\frac {(q-1)n_2} q -\frac {(q-1)d_2} {qN_2} \eta_j^{(d_2, r)}$ & $\frac {r-1} {d_2} (0 \leq j \leq d_2-1)$ \\
  $\frac {(q-1)n_2} q-\frac 1 q\sum\limits_{i=0}^{\frac {N_2} {d_2}-1}\eta_{N_0i+j}^{(N_2, r)}
\eta_{i+l}^{(\frac {N_2} {d_2}, q)}$  & $\frac {d_2(q-1)(r-1)} {N_2^2}
(\begin{array}{c}0 \leq j \leq N_2-1 \\  0 \leq l \leq \frac {N_2} {d_2}-1\end{array})$ \\
  \hline
\end{tabular}
\end{tabular}
\]

\begin{proof} For completeness we give a proof here. If $n_1=1$, then
we have $N_1=r-1, d_1=N_0$, and $d=\frac {N_0N_2} {d_2}$.
 By Theorem 3.1 we have
\begin{equation} Z(r,a,b)= \frac {n_2} q +\frac 1 q \sum_{y \in \Bbb F_q^\ast}\sum_{i=0}^{n_2-1}\psi(y a)\psi(y bg_2^i).\end{equation}

It is known that $\mbox{Tr}_{r/q}$ maps $\Bbb F_r$ onto $\Bbb F_q$ and
$$|\{ a \in \Bbb F_r : \mbox{Tr}_{r/q}(a)=\omega\}|=\frac r q$$ if fixing each $\omega \in \Bbb F_q$.
Note that $c(a_1, b)=c(a_2, b)$ if $\mbox{Tr}_{r/q}(a_1)=\mbox{Tr}_{r/q}(a_2)$. Then we can
determine the values of $Z(r, a, b)$ and their frequencies as follows.

(1) If $\mbox{Tr}_{r/q}(a)=0$ and $b=0$. We have
$$Z(r, a, b)=n_2.$$
This value occurs only once.

(2) If  $\mbox{Tr}_{r/q}(a)\ne 0$ and $b=0$, then
$$Z(r, a, b)=\frac {n_2} q +\frac {n_2} q \sum_{y \in \Bbb F_q^\ast} \psi(ya)
  = \frac {n_2} q +\sum_{y \in \Bbb F_q^\ast}\phi(y \mbox{Tr}_{r/q}(a))=0.$$
This value occurs $q-1$ times.

(3) If   $\mbox{Tr}_{r/q}(a)= 0$ and $b \neq 0$. By (4.2) and Lemma 4.1 we have
$$Z(r, a, b)=\frac {n_2} q +\frac {n_2} q \sum_{y \in \Bbb F_q^\ast} \phi(y \mbox{Tr}_{r/q}(a))\sum_{x\in C_0^{(N_2,r)}}\psi(ybx)= \frac {n_2} q +\frac {(q-1)d_2} {qN_2} \eta_j^{(d_2, r)}$$
for $b \in C_j^{(d_2, r)}$, where $d_2=\gcd(N_0,N_2)$. This value occurs $\frac {r-1} {d_2}$ times.

(4) If   $\mbox{Tr}_{r/q}(a)\ne 0$ and $b \neq 0$.
Suppose that $b \in C_{j}^{(N_2, r)}$ and  $\mbox{Tr}_{r/q}(a) \in C_k^{(\frac {N_2}{d_2}, q)}$, $0\le j\le N_2-1$, $0\le k\le \frac{N_2}{d_2}-1$, where $d_2=\gcd(N_0,N_2)$.
Then by (4.1) we have
$$Z(r, a, b)=\frac {n_2} q+\frac {1} {q}\sum_{i=0}^{\frac {N_2} {d_2}-1}\eta_{N_0i+j}^{(N_2, r)}
\sum_{y \in C_{N_0i}^{(\frac {N_0N_2} {d_2}, r)}}\psi(y a).$$
Moreover,
\begin{eqnarray*} \sum_{y \in C_{N_0i}^{(\frac {N_0N_2} {d_2}, r)}}\psi(y a)=\sum_{y\in C_{i}^{(\frac{N_2}{d_2}, q)}}\phi(y\mbox{Tr}_{r/q}(a))=\eta_{i+l}^{(\frac{N_2}{d_2}, q)}.
\end{eqnarray*}
This value occurs
$\frac {r-1} {N_2} \cdot \frac {(q-1)d_2} {N_2}=\frac {d_2(q-1)(r-1)}{N_2^2}$ times.

 Note that $W_H(c(a, b))= n-Z(r, a, b)$. Then we can obtain the Table 2 and this completes the proof.
\end{proof}

\section{ Weight distribution of $\mathcal C_{(q, m, n_1, n_2, 1)}$}

In this section, we shall determine the weight distribution of
$$\mathcal C_{(q, m, n_1,n_2, 1)}=\{ c(a, b, c) : a, b, c \in \Bbb F_r\},$$
 where $\gcd(n_1, n_2)=1$ and
 $$c(a, b)=(\mbox{Tr}_{r/q}(ag_1^0+bg_2^0+c), \mbox{Tr}_{r/q}(ag_1^1+bg_2^1+c), \ldots, \mbox{Tr}_{r/q}(ag_1^{n-1}+bg_2^{n-1}+c)).$$
 It follows from Delsarte's Theorem \cite{D} that the code $\mathcal C_{(q, m, n_1,n_2, 1)}$ is a cyclic code
over $\Bbb F_q$ with the parity-check polynomial $(x-1)h_1(x)h_2(x)$, where $h_1(x)$ and
$h_2(x)$ are the minimal polynomial of $g_1^{-1}$ and $g_2^{-1}$ over $\Bbb F_q$, respectively.
For any $a, b \in \Bbb F_r$, the Hamming weight
of $c(a, b,c)$ is equal to $$W_H(c(a, b, c))= n-Z(r, a, b, c),$$ where
$$Z(r, a, b, c)=|\{i : \mbox{Tr}_{r/q}(ag_1^i+bg_2^i+c)=0, 0 \leq i \leq n_1n_2-1\}|.$$

\begin{thm}
 If $\gcd(n_1, n_2)=1$, $N_0=\frac {r-1} {q-1}, d_1=\gcd(N_0, N_1), d_2=\gcd(N_0, N_2)$, and
 $d=\gcd(\frac {N_0N_2} {d_2}, N_1)$, then the weight distribution of the cyclic code $\mathcal C_{(q, m, n_1, n_2, 1)}$ is given by Table 3.
\end{thm}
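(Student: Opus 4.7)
The plan is to specialize the proof of Theorem~4.2 to the case $u = 3$ with $g_3 = 1$ and $n_3 = 1$. First I would invoke Theorem~3.1: since $C_0^{(r-1,\,r)} = \{1\}$, the third inner factor collapses to $\psi(yc) = \phi(y\mathrm{Tr}_{r/q}(c))$, yielding
\[
Z(r, a, b, c) = \frac{n_1 n_2}{q} + \frac{1}{q} \sum_{y \in \Bbb F_q^\ast} \phi(y \mathrm{Tr}_{r/q}(c)) \sum_{x \in C_0^{(N_1, r)}} \psi(y a x) \sum_{z \in C_0^{(N_2, r)}} \psi(y b z).
\]
Because $\phi(y \mathrm{Tr}_{r/q}(c))$ depends on $c$ only through $t := \mathrm{Tr}_{r/q}(c) \in \Bbb F_q$, the Hamming weight of $c(a,b,c)$ is a function of $(a, b, t)$; every value of $t$ is realised by $r/q$ choices of $c$, so the frequencies in Table~3 will carry an extra factor of $r/q$ relative to the analogous cases of Table~1.

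Next I would partition the analysis according to whether $a, b$ vanish and whether $t = 0$. When $t = 0$ the outer character is trivial and the double sum is precisely the one evaluated in Theorem~4.2, so each weight of Table~1 reappears with its frequency multiplied by $r/q$. When $t \ne 0$ and $a = b = 0$, orthogonality gives $\sum_{y \in \Bbb F_q^\ast} \phi(yt) = -1$, hence $Z = 0$ and the weight equals $n_1 n_2$. In the one-sided cases $(a = 0,\, b \ne 0,\, t \ne 0)$ and $(a \ne 0,\, b = 0,\, t \ne 0)$ I would follow Case~(4) of Corollary~4.4: split $\Bbb F_q^\ast$ into cosets of $\Bbb F_q^\ast \cap C_0^{(N_i, r)} = C_0^{(N_i/d_i,\, q)}$, so that the inner $\Bbb F_r$-sum becomes a Gauss period of order $d_i$ in $\Bbb F_r^\ast$, and reassemble the outer weighting by $\phi(yt)$ into a Gauss period of order $N_i/d_i$ in $\Bbb F_q^\ast$ at argument $t$.

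The main obstacle is the case $a \ne 0,\ b \ne 0,\ t \ne 0$. Here I would keep the Theorem~4.2 decomposition $\Bbb F_q^\ast = \bigcup_{i=0}^{N_2/d_2 - 1} C_{N_0 i}^{(N_0 N_2/d_2,\, r)}$, which evaluates the $z$-sum to $\eta_{N_0 i + j}^{(N_2, r)}$ when $b \in C_j^{(N_2, r)}$. The residual sum
\[
T_i = \sum_{y \in C_{N_0 i}^{(N_0 N_2/d_2,\, r)}} \phi(y t) \sum_{x \in C_0^{(N_1, r)}} \psi(y a x)
\]
can no longer be collapsed by the $yx$-substitution of Theorem~4.2, because $\phi(yt)$ depends on $y$ alone rather than on $yx$. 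My plan is to refine each $C_{N_0 i}^{(N_0 N_2/d_2,\, r)}$, which viewed inside $\Bbb F_q^\ast$ is the class $C_i^{(N_2/d_2,\, q)}$, into cosets of the smaller subgroup $\Bbb F_q^\ast \cap C_0^{(N_1, r)} = C_0^{(N_1/d_1,\, q)}$. On each such smaller coset $y a$ lies in a single class $C_\nu^{(N_1, r)}$, so the $x$-sum is a constant Gauss period $\eta_\nu^{(N_1, r)}$, while the remaining summation of $\phi(yt)$ over coset representatives assembles, via the class of $t$ in $\Bbb F_q^\ast$, into a Gauss period of the appropriate order in $\Bbb F_q$ evaluated at $t$. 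The principal difficulty is the bookkeeping: aligning the joint refinement of $\Bbb F_q^\ast$ by the $N_1$- and $N_2$-cyclotomic structures of $\Bbb F_r^\ast$, and checking that the resulting double-indexed products of one $\Bbb F_r$-Gauss period and one $\Bbb F_q$-Gauss period, with their frequency counts, reproduce the last rows of Table~3. Finally the identity $W_H(c(a,b,c)) = n_1 n_2 - Z(r,a,b,c)$, together with multiplication of each frequency by $r/q$ (for the fibre of $c \mapsto \mathrm{Tr}_{r/q}(c)$), yields Table~3.
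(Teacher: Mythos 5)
Your overall plan coincides with the paper's: specialize Theorem 3.1 with $g_3=1$ so that the third factor is $\phi(y\mbox{Tr}_{r/q}(c))$, split into cases according to whether $a$, $b$ and $t=\mbox{Tr}_{r/q}(c)$ vanish, recover the rows of Table 1 when $t=0$, and evaluate the remaining cases by coset decompositions of $\Bbb F_q^\ast$. However, your frequency count is wrong. You assert that every frequency acquires a factor $r/q$ from the fibre of $c\mapsto\mbox{Tr}_{r/q}(c)$, but since $g_3=1$ the whole codeword, not merely its weight, depends on $c$ only through $t$: the $r/q$ elements of a fibre all give the \emph{same} codeword and hence contribute $1$ to the weight distribution, not $r/q$. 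This is precisely why the paper records that $c(a,b,c_1)=c(a,b,c_2)$ whenever $\mbox{Tr}_{r/q}(c_1)=\mbox{Tr}_{r/q}(c_2)$ and counts triples $(a,b,t)$. The frequencies in Table 3 sum to $qr^2=q^{2m+1}$, the size of a code with parity-check polynomial $(x-1)h_1(x)h_2(x)$, whereas yours would sum to $r^3$; moreover the $t=0$ rows of Table 3 visibly carry the same frequencies as Table 1, with no factor $r/q$.

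There are also two technical slips in the character-sum evaluations. In the one-sided case $a=0$, $b\neq 0$, $t\neq 0$, once $y$ is restricted to a coset of $\Bbb F_q^\ast\cap C_0^{(N_2,r)}$, the inner sum $\sum_{z\in C_0^{(N_2,r)}}\psi(ybz)$ is a Gauss period of order $N_2$ over $\Bbb F_r$ (constant on that coset), not of order $d_2$; the order-$d_2$ period arises only when $t=0$, where $y$ and $z$ can be merged into a single variable running over a $d_2$-class. Table 3 indeed shows $\eta^{(N_2,r)}$ paired with $\eta^{(N_2/d_2,q)}$ in that row. In the main case with $a,b,t$ all nonzero, you refine $C_{N_0i}^{(N_0N_2/d_2,r)}$ into cosets of $\Bbb F_q^\ast\cap C_0^{(N_1,r)}=\langle\alpha^{N_0N_1/d_1}\rangle$, but this subgroup need not be contained in $\langle\alpha^{N_0N_2/d_2}\rangle$ (take $N_0=8$, $N_1=16$, $N_2=3$, $d_1=8$, $d_2=1$ as in Example 4.6, where the indices are $16$ and $24$), so the proposed refinement is not well defined in general. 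The correct subgroup is the intersection $C_0^{(N_0N_2/d_2,r)}\cap C_0^{(N_1,r)}=\langle\alpha^{N_0N_1N_2/(dd_2)}\rangle$, of index $N_1/d$, which is what produces the periods $\eta^{(N_1N_2/(dd_2),q)}$ and the inner range $0\le i'\le \frac{N_1}{d}-1$ in the last row of Table 3.
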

\[ \begin{tabular} {c} Table 3. Weight distribution of $\mathcal C_{(q, m,n_1, n_2, 1)}$ when $\gcd(n_1, n_2)=1$ \\
\begin{tabular}{|c|c|}
  \hline
 Weight & Frequency \\
  \hline
        0   &    1\\
  $n_1n_2$   &   $q-1$ \\
  $\frac {(q-1)n_1n_2} q -\frac {(q-1)d_2n_1} {qN_2} \eta_j^{(d_2, r)}$  & $\frac {r-1} {d_2}(0\leq j \leq d_2-1)$ \\
  $\frac {(q-1)n_1n_2} q-\frac {n_1} q \sum\limits_{i=0}^{\frac {N_2} {d_2}-1}
\eta_{N_0i+j}^{(N_2,r)}\eta_{i+k}^{(\frac {N_2} {d_2}, q)}$ &
$\frac {d_2(q-1)(r-1)} {N_2^2}(\begin{array}{c} 0\leq j \leq N_2-1 \\0\leq k \leq \frac {N_2}{d_2}-1 \end{array})$  \\
  $\frac {(q-1)n_1n_2} q -\frac {(q-1)d_1n_2} {qN_1} \eta_j^{(d_1, r)}$  & $\frac {r-1} {d_1}(0\leq j \leq d_1-1)$ \\
  $\frac {(q-1)n_1n_2} q-\frac {n_2} q \sum\limits_{i=0}^{\frac {N_1} {d_1}-1}
\eta_{N_0i+j}^{(N_1,r)}\eta_{i+k}^{(\frac {N_1} {d_1}, q)})$ &
$\frac {d_1(q-1)(r-1)} {N_1^2}(\begin{array}{c} 0\leq j \leq N_1-1 \\0\leq k \leq \frac {N_1}{d_1}-1 \end{array})$  \\
  $\frac {(q-1)n_1n_2} q -\frac {(q-1)dd_2} {qN_1N_2}
  \sum\limits_{i=0}^{\frac {N_2} {d_2}-1} \eta_{N_0i+j}^{(N_2, r)} \eta_{N_0i+k}^{(d, r)}$
  & $\frac {(r-1)^2} {dN_2}(\begin{array}{c} 0\leq j \leq N_2-1 \\ 0\leq k \leq d-1\end{array})$ \\
  $\frac {(q-1)n_1n_2} q -\frac 1 q \sum\limits_{i=0}^{\frac {N_2} {d_2}-1} \eta_{N_0i+k}^{(N_2, r)}
\sum\limits_{i'=0}^{\frac {N_1} d-1}\eta_{N_0(i+\frac{N_2}{d_2}i')+j}^{(N_1, r)}
\eta_{i+\frac{N_2}{d_2}i'+l}^{(\frac {N_1N_2}{dd_2}, q)}$ &
$\frac {dd_2(q-1)(r-1)^2} {N_1^2N_2^2}(\begin{array}{c} 0\leq j \leq N_1-1 \\ 0\leq k \leq N_2-1 \\
0\leq l \leq \frac {N_1N_2}{dd_2}-1 \end{array})$ \\
  \hline
\end{tabular}
\end{tabular}
\]

\begin{proof} By Theorem 3.1 we have
$$Z(r,a_1, a_2, \ldots, a_u)=\frac n q+
\frac 1 q \sum_{y \in \Bbb F_q^\ast}\psi(yc)
\sum_{x_1 \in C_0^{(N_1, r)}}\psi(y ax_1)
\sum_{x_2 \in C_0^{(N_2, r)}}\psi(y bx_2).$$
It is known that $\mbox{Tr}_{r/q}$ maps $\Bbb F_r$ onto $\Bbb F_q$ and
$$|\{ c \in \Bbb F_r : \mbox{Tr}_{r/q}(c)=\omega\}|=\frac r q$$ for each $\omega \in \Bbb F_q$.
Note that $c(a, b, c_1)=c(a, b, c_2)$ if $\mbox{Tr}_{r/q}(c_1)=\mbox{Tr}_{r/q}(c_2)$. Then we can
determine the values of $Z(r, a, b, c)$ and their frequencies as follows.

(1) If $a=0, b=0$, and $\mbox{Tr}_{r/q}(c)=0$, then we have
$$Z(r,a,b,c)=n_1n_2.$$
This value occurs once.

(2) If $a=0, b=0$, and $\mbox{Tr}_{r/q}(c)\neq 0$, then we have
$$Z(r,a,b,c)=0.$$
This value occurs $q-1$ times.

(3) If $a=0, b \neq 0$, and $\mbox{Tr}_{r/q}(c)=0$. By Corollary 4.7 we have
$$Z(r,a,b,c)=n_1(\frac {n_2} q+\frac{(q-1)d_2} {qN_2} \eta_j^{(d_2, r)})$$
for $b \in C_j^{(d_2, r)}$. This value occurs $\frac {r-1} {d_2}$ times.

(4) If $a=0, b \neq 0$, and $\mbox{Tr}_{r/q}(c)\neq 0$. By Corollary 4.7 we have
$$Z(r,a,b,c)=n_1(\frac {n_2} q+\frac 1 q \sum_{i=0}^{\frac {N_2} {d_2}-1}
\eta_{N_0 i+j}^{(N_2,r)}\eta_{i+k}^{(\frac {N_2} {d_2}, q)})$$
for $b \in C_{j}^{(N_2, r)}$ and $\mbox{Tr}_{r/q}(c)\in C_{k}^{(\frac {N_2} {d_2}, q)}$.
This value occurs $\frac {d_2(q-1)(r-1)} {N_2^2}$ times.

(5) If $a \neq 0, b=0$, and $\mbox{Tr}_{r/q}(c)=0$. By Corollary 4.7 we have
$$Z(r,a,b,c)=n_2(\frac {n_1} q+\frac{(q-1)d_1} {qN_1} \eta_j^{(d_1, r)})$$
for $a \in C_j^{(d_1, r)}$. This value occurs $\frac {r-1} {d_1}$ times.

(6) If $a \neq 0, b=0$, and $\mbox{Tr}_{r/q}(c)\neq 0$. By Corollary 4.7 we have
$$Z(r,a,b,c)=n_2(\frac {n_1} q+\frac 1 q \sum_{i=0}^{\frac {N_1} {d_1}-1}
\eta_{N_0 i+j}^{(N_1,r)}\eta_{i+k}^{(\frac {N_1} {d_1}, q)})$$
for $a \in C_{j}^{(N_1, r)}$ and $\mbox{Tr}_{r/q}(c)\in C_{k}^{(\frac {N_1} {d_1}, q)}$.
This value occurs $\frac {d_1(q-1)(r-1)} {N_1^2}$ times.

(7) If $a \neq 0, b \neq 0$, and $\mbox{Tr}_{r/q}(c)=0$. By Theorem 4.2 we have
$$Z(r, a, b)=\frac {n_1n_2} q+\frac {(q-1)dd_2} {qN_1N_2}\sum_{i=0}^{\frac {N_2} {d_2}-1}
\eta_{N_0i+j}^{(N_2, r)}\eta_{N_0i+k}^{(d, r)}$$
for $a \in C_k^{(d, r)}$ and $b \in C_{j}^{(N_2, r)}$.
This value occurs $\frac {(r-1)^2}{dN_2}$ times.

(8) If $a \neq 0, b \neq 0$, and $\mbox{Tr}_{r/q}(c)\neq 0$. Suppose that $a \in C_{j}^{(N_1, r)}, b \in C_{k}^{(N_2, r)}$,
and $\mbox{Tr}_{r/q}(c) \in C_{l}^{(\frac {N_1N_2} {dd_2}, q)}$.
Note that $C_{N_0 i}^{(\frac {N_0N_2} {d_2}, r)}
= C_i^{(\frac {N_2} {d_2}, q)}$ and $\Bbb F_q^\ast=\cup_{i=0}^{\frac {N_2} {d_2}-1}C_i^{(\frac {N_2} {d_2}, q)}$.
Then by the proof of
 Theorem 3.1 and Lemma 4.1 we have
\begin{eqnarray*}
Z(r,a,b,c)
&=& \frac {n_1n_2} q +\frac 1 q \sum_{y \in \Bbb F_q^\ast}\psi(y c)\sum_{x \in C_0^{(N_1, r)}}\psi(yax)\sum_{z \in C_0^{(N_2, r)}}\psi(ybz) \\
&=& \frac {n_1n_2} q +\frac 1 q \sum_{i=0}^{\frac {N_2} {d_2}-1}\sum_{y \in C_i^{(\frac {N_2} {d_2}, q)}} \psi(yc)
\sum_{x \in C_0^{(N_1, r)}}\psi(yax)\sum_{z \in C_0^{(N_2, r)}}\psi(ybz) \\
&=& \frac {n_1n_2} q +\frac 1 q \sum_{i=0}^{\frac {N_2} {d_2}-1}\sum_{y \in C_i^{(\frac {N_2} {d_2}, q)}} \psi(yc)
\sum_{x \in C_0^{(N_1, r)}}\psi(yax)\sum_{z \in C_{N_0i}^{(N_2, r)}}\psi(bz) \\
&=& \frac {n_1n_2} q +\frac 1 q \sum_{i=0}^{\frac {N_2} {d_2}-1} \eta_{N_0i+k}^{(N_2, r)}
\sum_{y \in C_i^{(\frac {N_2} {d_2}, q)}} \psi(yc)\sum_{x \in C_0^{(N_1, r)}}\psi(yax).
\end{eqnarray*}

 Denote $H=C_0^{(\frac {N_2}{d_2}, q)}=C_0^{(\frac{N_0N_2}{d_2}, r)}$ and $K=C_0^{(N_1,r)}$. Then by Lemma 4.1 we know $H\cap K=\langle \alpha^{\frac{N_0N_1N_2}{dd_2}}\rangle$ and $H\cdot K=\langle \alpha^{d}\rangle$, where $d=\gcd(\frac{N_0N_2}{d_2}, N_1)$.  Hence we have
$$C_i^{(\frac {N_2}{d_2}, q)}=\alpha^{N_0i} \langle \alpha^{\frac {N_0N_2}{d_2}} \rangle
=\alpha^{N_0i} \cup_{i'=0}^{\frac {N_1} d-1}\alpha^{ \frac {N_0N_2}{d_2}i'}
\langle \alpha^{\frac {N_0N_2}{d_2} \cdot \frac {N_1} d} \rangle$$
and
\begin{eqnarray*} C_i^{(\frac {N_2}{d_2}, q)} \cdot C_0^{(N_1, r)}&=&\alpha^{N_0i} \cdot
\cup_{i'=0}^{\frac {N_1} d-1}\alpha^{\frac {N_0N_2}{d_2}i'} C_0^{(N_1, r)} \\
&=& \cup_{i'=0}^{\frac {N_1} d-1}\alpha^{N_0 (i+\frac {N_2}{d_2}i')} C_0^{(N_1, r)} \\
&=& \cup_{i'=0}^{\frac {N_1} d-1}C_{N_0 (i+\frac {N_2}{d_2}i')}^{(N_1, r)}.
\end{eqnarray*}
Therefore we have
\begin{eqnarray*}
Z(r,a,b,c)&=& \frac {n_1n_2} q +\frac 1 q \sum_{i=0}^{\frac {N_2} {d_2}-1} \eta_{N_0i+k}^{(N_2, r)}
\sum_{i'=0}^{\frac {N_1} d-1} \sum_{y \in C_{i+\frac {N_2}{d_2}i'}^{(\frac {N_1N_2}{dd_2}, q)}} \psi(yc)
\sum_{x \in C_0^{(N_1, r)}}\psi(yax) \\
&=& \frac {n_1n_2} q +\frac 1 q \sum_{i=0}^{\frac {N_2} {d_2}-1} \eta_{N_0i+k}^{(N_2, r)}
\sum_{i'=0}^{\frac {N_1} d-1} \sum_{y \in C_{i+\frac {N_2}{d_2}i'}^{(\frac {N_1N_2}{dd_2}, q)}} \psi(yc)
\sum_{x \in C_{N_0(i+\frac{N_2}{d_2}i')}^{(N_1, r)}}\psi(ax) \\
&=& \frac {n_1n_2} q +\frac 1 q \sum_{i=0}^{\frac {N_2} {d_2}-1} \eta_{N_0i+k}^{(N_2, r)}
\sum_{i'=0}^{\frac {N_1} d-1}\eta_{N_0(i+\frac{N_2}{d_2}i')+j}^{(N_1, r)}
\eta_{i+\frac{N_2}{d_2}i'+l}^{(\frac {N_1N_2}{dd_2}, q)}.
\end{eqnarray*}
This value occurs $\frac {dd_2(q-1)(r-1)^2} {N_1^2N_2^2}$ times.

 Note that $W_H(c(a, b, c))= n_1n_2-Z(r, a, b, c)$. Then we can obtain the Table 3 and this completes the proof.
\end{proof}

\begin{exa}
Let $(q, m, n_1, n_2)=(4, 2, 5, 3)$. Then we have
$r=16, N_1=3, N_2=5, d_1=1, d_2=5, d=1, \frac {N_1}{d_1}=3, \frac{N_2}{d_2}=3, \frac {N_1} d=3$, and
$\frac {N_1N_2} {dd_2}=3$.
By Lemma 2.2 we have
$$\eta_0^{(3, 4)}=1, \eta_1^{(3, 4)}=\eta_2^{(3, 4)}=-1,$$
$$\eta_0^{(3, 16)}=-3, \eta_1^{(3, 16)}=\eta_2^{(3, 16)}=1,$$
$$\eta_0^{(5, 16)}=3, \eta_i^{(5, 16)}=-1 \mbox{ for } i=1, 2, 3, 4.$$
Note that $\eta_0^{(1, 16)}=\eta_0^{(1, 4)}=-1$. Then by Table 3 we know that
$\mathcal C_{(q, m, n_1, n_2, 1)}$ is a $[15, 4, 9]$ cyclic code
over $\Bbb F_4$ with the weight enumerator
$$1+30x^9+54x^{10}+45x^{11}+105x^{12}+21x^{15}.$$
\end{exa}

\begin{exa}
Let $(q, m, n_1, n_2)=(8, 2, 9, 7)$. Then we have
$r=64, N_1=7, N_2=9, d_1=1, d_2=9, d=1, \frac {N_1} {d_1}=7, \frac {N_2} {d_2}=1, \frac {N_1} d=7$, and
$\frac {N_1N_2} {dd_2}=7$.
By Lemma 2.2 we have
$$\eta_0^{(9, 64)}=7, \eta_i^{(9, 64)}=-1 \mbox{ for } i=1, 2, \ldots, 8.$$
By Lemma 2.3 we have
$$\eta_0^{(7, 8)}=-1, \eta_1^{(7, 8)}=\eta_2^{(7, 8)}=\eta_4^{(7, 8)}=-1, \eta_3^{(7, 8)}=\eta_5^{(7, 8)}=\eta_6^{(7, 8)}=1$$
and
$$\eta_0^{(7, 64)}=5, \eta_1^{(7, 64)}=\eta_2^{(7, 64)}=\eta_4^{(7, 64)}=-3, \eta_3^{(7, 64)}=\eta_5^{(7, 64)}=\eta_6^{(7, 64)}=1.$$
Note that $\eta_0^{(1, 8)}=\eta_0^{(1, 64)}=-1$. Then by Table 3 we know that
$\mathcal C_{(q, m, n_1, n_2)}$ is a $[63, 4, 49]$ cyclic code
over $\Bbb F_8$ with the weight enumerator
$$1+252x^{49}+1372x^{54}+441x^{55}+1827x^{56}+203x^{63}.$$
\end{exa}

\begin{thm}
 If $q=2$ and $\gcd(n_1, n_2)=1$,
 then the weight distribution of the cyclic code $\mathcal C_{(q, m, n_1, n_2, 1)}$ is given by Table 4.
\end{thm}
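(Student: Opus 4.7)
The plan is to apply Corollary 3.2 with $u=3$ and $n_3=1$, then split into cases according to whether $a,b$ vanish and to the value of $\mathrm{Tr}_{r/2}(c)$.

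First I would unpack the corollary in the third coordinate. With $g_3=1$ we have $n_3=1$ and $N_3=r-1$, so $C_j^{(r-1,r)}=\{\alpha^j\}$ for $0\le j\le r-2$, and $\psi(\alpha^j)=(-1)^{\mathrm{Tr}_{r/2}(\alpha^j)}$. Combined with the boundary case $c=0$, where $\overline{\eta}_c=n_3=1$, this yields the uniform identity $\overline{\eta}_c=(-1)^{\mathrm{Tr}_{r/2}(c)}$ for every $c\in\Bbb F_r$. Inserting it into Corollary 3.2 gives the compact weight formula
\[
W_H(c(a,b,c)) \;=\; \frac{n_1 n_2}{2}-\frac{(-1)^{\mathrm{Tr}_{r/2}(c)}}{2}\,\overline{\eta}_a\,\overline{\eta}_b.
\]

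Next I would run through the four cases $(a,b)=(0,0)$, $(a=0,b\ne 0)$, $(a\ne 0, b=0)$ and $(a\ne 0,b\ne 0)$, in each of which Corollary 3.2 supplies $\overline{\eta}_a\overline{\eta}_b$ explicitly in terms of $n_1,n_2$ and the Gauss periods $\eta_j^{(N_1,r)},\eta_k^{(N_2,r)}$. The sign $(-1)^{\mathrm{Tr}_{r/2}(c)}$ simply toggles whether $\overline{\eta}_a\overline{\eta}_b/2$ is added to or subtracted from $n_1n_2/2$, so each case generically produces two weight values. The frequencies come from the standard counts $|C_j^{(N_i,r)}|=n_i$ together with $|\{c\in\Bbb F_r:\mathrm{Tr}_{r/2}(c)=\varepsilon\}|=r/2$ for $\varepsilon\in\Bbb F_2$.

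Finally, to convert triple counts into codeword multiplicities I would observe that $c(a,b,c)=c(a',b',c')$ iff $(a-a',b-b',c-c')$ lies in the kernel of the parametrization, which case (1) identifies with $\{(0,0,c_0):\mathrm{Tr}_{r/2}(c_0)=0\}$, of order $r/2$. Dividing the raw triple counts by $r/2$ then gives the multiplicities of Table~4. There is no serious obstacle, since the coprimality of $n_1,n_2$ that decouples the Gauss period sums has already been absorbed into Corollary 3.2; the remaining work is purely bookkeeping.
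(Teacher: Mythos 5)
Your route is genuinely different from the paper's. The paper proves Theorem 5.4 in one line by specializing Theorem 5.1: for $q=2$ one has $N_0=r-1$, hence $d_1=N_1$, $d_2=N_2$, $d=N_1$, and Table~4 is read off from Table~3. You instead go back to Corollary~3.2 with $u=3$, $n_3=1$, observe that $\overline{\eta}_c=(-1)^{\mathrm{Tr}_{r/2}(c)}$ for every $c$ (including $c=0$), and get the closed form $W_H=\frac{n_1n_2}{2}-\frac{(-1)^{\mathrm{Tr}_{r/2}(c)}}{2}\overline{\eta}_a\overline{\eta}_b$ directly. This is cleaner: it bypasses the whole $d_1,d_2,d$ apparatus of Sections~4--5, and all eight weight values in Table~4 drop out correctly. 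Note that your frequency for the last two rows is $\frac{(r-1)^2}{N_1N_2}=n_1n_2$ per pair $(j,k)$, not the printed $\frac{r-1}{N_2}$; but your value is the one consistent with specializing Table~3 and with the total count $2r^2$ of parameter triples $(a,b,\mathrm{Tr}(c))$, so the printed entry appears to be a typo and this discrepancy is not a fault of your argument.

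There is, however, one genuine flaw: your final step asserts that the kernel of $(a,b,c)\mapsto c(a,b,c)$ is exactly $\{(0,0,c_0):\mathrm{Tr}_{r/2}(c_0)=0\}$, ``which case (1) identifies.'' Case (1) only shows these triples lie in the kernel; it does not show they exhaust it, and in general they do not. Example~5.5 of the paper is a counterexample: there $r=16$, $g_2$ generates $\Bbb F_4^*$, and $\eta_0^{(5,16)}=3=n_2$, i.e.\ $\mathrm{Tr}_{16/2}$ vanishes on all of $C_0^{(5,16)}$, so $c(0,b,c_0)=0$ for every $b$ with $\mathrm{Tr}_{16/4}(b)=0$ and every $c_0$ of trace zero; the kernel has order $32$, not $r/2=8$. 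Consequently ``divide the raw triple counts by $r/2$'' yields the number of parameter classes $(a,b,\mathrm{Tr}(c))$, which is what the paper's tables actually record (the paper silently counts these classes and never addresses collisions between distinct $(a,b)$), but it is not in general the multiplicity of distinct codewords. To close the gap you must either adopt that counting convention explicitly, or add the hypothesis that the minimal polynomials of $g_1^{-1}$ and $g_2^{-1}$ both have degree $m$ (equivalently, that no $\overline{\eta}_a\overline{\eta}_b=\pm n_1n_2$ occurs for $(a,b)\neq(0,0)$), under which your kernel claim and the division by $r/2$ are valid.
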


\[ \begin{tabular} {c} Table 4. Weight distribution of $\mathcal C_{(q, m,n_1, n_2, 1)}$ when $N_1 \mid N_0$ and $N_2 \mid N_0$. \\
\begin{tabular}{|c|c|}
  \hline
 Weight & Frequency \\
  \hline
        0   &    1\\
  $n_1n_2$   &   1 \\
  $\frac {n_1n_2} 2 -\frac {n_1} 2 \eta_k^{(N_2, r)}$  & $\frac {r-1} {N_2}(0\leq k \leq N_2-1)$ \\
  $\frac {n_1n_2} 2 +\frac {n_1} 2 \eta_k^{(N_2, r)}$  & $\frac {r-1} {N_2}(0\leq k \leq N_2-1)$ \\
  $\frac {n_1n_2} 2 -\frac {n_2} 2 \eta_j^{(N_1, r)}$  & $\frac {r-1} {N_1}(0\leq j \leq N_1-1)$ \\
  $\frac {n_1n_2} 2 +\frac {n_2} 2 \eta_j^{(N_1, r)}$  & $\frac {r-1} {N_1}(0\leq j \leq N_1-1)$ \\
  $\frac {n_1n_2} 2 -\frac 1 2 \eta_j^{(N_1, r)}\eta_k^{(N_2, r)}$  & $\frac {r-1} {N_2}
  (\begin{array}{c} 0\leq j \leq N_1-1 \\ 0\leq k \leq N_2-1\end{array})$ \\
  $\frac {n_1n_2} 2 +\frac 1 2 \eta_j^{(N_1, r)}\eta_k^{(N_2, r)}$  & $\frac {r-1} {N_2}
  (\begin{array}{c} 0\leq j \leq N_1-1 \\ 0\leq k \leq N_2-1\end{array})$ \\
  \hline
\end{tabular}
\end{tabular}
\]

\begin{proof}
If $q=2$, then $N_0=r-1$, $N_1 \mid N_0$, and $N_2 \mid N_0$. Thus we have
$d_1=N_1, d_2=N_2$, and $d=N_1$.
Then Table 4 can be obtained by Table 3 and this completes the proof.
\end{proof}

\begin{exa}
Let $(q, m, n_1, n_2)=(2, 4, 5, 3)$. Then we have
$r=16, N_1=3, N_2=5$.
By Lemma 2.2 we have
$$\eta_0^{(3, 16)}=-3, \eta_1^{(3, 16)}=\eta_2^{(3, 16)}=1,$$
$$\eta_0^{(5, 16)}=3, \eta_i^{(5, 16)}=-1 \mbox{ for } i=1, 2, 3, 4.$$
 Then by Table 4 we know that
$\mathcal C_{(q, m, n_1, n_2, 1)}$ is a $[15, 7, 3]$ cyclic code
over $\Bbb F_2$ with the weight enumerator
$$1+5x^3+3x^5+25x^6+30x^7+30x^8+25x^9+3x^{10}+5x^{12}+x^{15}.$$
\end{exa}

\section{Concluding remarks}

In this paper, we have presented a method to compute the weights of codewords of cyclic code
$\mathcal C_{(q, m, n_1,n_2, \ldots, n_u)}$ and have determined the weight distributions of
the cyclic codes $\mathcal C_{(q, m, n_1,n_2)}$ and $\mathcal C_{(q, m, n_1,n_2,1)}$.

Let $g_1$ and $g_2$ be two elements of $\Bbb F_r^\ast$ with the coprime orders $n_1$ and
$n_2$, respectively. For a divisor $e_1$ of $n_1$, we assume that $g_1$ and $g_3=\mu_1 g_1$ are not conjugates of each other
 over $\Bbb F_q$ and $\ord(g_3)=n_1$, where $\mu_1 \in \Bbb F_r^\ast$ and $\ord(\mu)=e_1$. The weight distribution of
 the cyclic code $\mathcal C_{(q, m, n_1, n_2, n_1)}$ can be determined by our method and the method in \cite{DMLZ} or \cite{Ma}.
 Moreover, let $\mu_2$ be a elements of $\Bbb F_r^\ast$ and $\ord(\mu_2)=e_2$, where $e_2$ is a divisor of $n_2$.
 Suppose that $g_2$ and $g_4=\mu_2 g_2$ are not conjugates of each other over $\Bbb F_q$. Then the weight distribution
 of the cyclic code $\mathcal C_{(q, m, n_1, n_2, n_1, n_2)}$ can also be determined by our method and the method in \cite{DMLZ} or \cite{Ma}.
 We leave this for future work.

\end{document}